\documentclass{uai2024} 

                        

\usepackage[american]{babel}

\usepackage{natbib} 
    \bibliographystyle{plainnat}
    
\usepackage{mathtools} 
\usepackage{booktabs} 
\usepackage{tikz} 
\usepackage[capitalize]{cleveref}
\usepackage{tikz}
\usetikzlibrary{shapes.geometric,fit}
\RequirePackage{amsthm,amsmath,amsfonts,amssymb}
\usepackage[ruled,noend,linesnumbered]{algorithm2e} 
\usepackage{subcaption}
\usepackage{thmtools}
\usepackage{cases}

\NewDocumentCommand{\ARef}{ s s m }{%
    \IfBooleanTF{#2}{}{%
        \cref{#3}%
    }%
    \IfBooleanT{#1}{%
        \IfBooleanF{#2}{%
            , %
        }%
        line~\ref{#3}%
    }%
}

\SetKw{FAIL}{}
\SetKwComment{Comment}{/* }{ */}

\newcommand\ci{\perp\!\!\!\perp}

\usepackage[normalem]{ulem}
\newcommand{\stkout}[1]{\ifmmode\text{\sout{\ensuremath{#1}}}\else\sout{#1}\fi}

\newtheorem{dfn}{Definition}

\newtheorem{ex}{Example}
\crefname{assumption}{Assumption}{Assumptions}
\crefname{thm}{Theorem}{Theorems}
\crefname{dfn}{Definition}{Definitions}
\crefname{lem}{Lemma}{Lemmas}
\crefname{algocf}{Algorithm}{Algorithms}

\newcommand{\G}{\mathcal{G}}

\DeclareMathOperator{\pa}{pa}
\DeclareMathOperator{\ch}{ch}

\DeclareMathOperator{\pre}{pre}
\DeclareMathOperator{\pas}{spa}
\DeclareMathOperator{\an}{an}

\DeclareMathOperator{\de}{de}

\DeclareMathOperator{\nd}{nd}

\DeclareMathOperator{\E}{\mathbb{E}}
\DeclareMathOperator{\doo}{do}
\DeclareMathOperator{\dis}{dis}

\DeclareMathOperator{\cl}{cl}


\AddToHook{cmd/appendix/after}{}


\title{A 
General Identification Algorithm For Data Fusion Problems Under Systematic Selection}

%
%
\author[1]{\href{mailto:<jaron2005@gmail.com>?Subject=Your UAI 2024 paper}{Jaron J. R. Lee}{}}
\author[2]{AmirEmad Ghassami}
\author[1]{Ilya Shpitser}
\affil[1]{%
    Department of Computer Science\\
    Johns Hopkins University\\
    Baltimore, Maryland, USA
}
\affil[2]{%
    Department of Mathematics and Statistics\\
    Boston University\\
    Boston, Massachusetts, USA\\
}
 
\begin{document}
\maketitle

\begin{abstract}
  Causal inference is made challenging by confounding,
  selection bias, and other complications.
  A common approach to addressing these difficulties is
  the inclusion of auxiliary data on the superpopulation of interest.  Such data may measure a different set of variables, or be obtained under different experimental conditions than the primary dataset.  Analysis based on multiple datasets must carefully account for similarities between datasets, while appropriately accounting for differences.

  In addition, selection of experimental units into different datasets may be systematic; similar  difficulties are encountered in missing data problems.  Existing methods for combining datasets either do not consider this issue, or assume simple selection mechanisms.

  In this paper, we provide a general approach, based on graphical causal models, for causal inference from data 
  on the same superpopulation that is obtained
  under different experimental conditions.  Our framework allows both arbitrary unobserved confounding, and arbitrary selection processes into different experimental regimes in our data.

  We describe how systematic selection processes may be organized into a hierarchy similar to censoring processes in missing data: selected completely at random (SCAR), selected at random (SAR), and selected not at random (SNAR).
  In addition, we provide a general identification algorithm for interventional distributions in this setting.
\end{abstract}

\section{Introduction}\label{sec:intro}
Understanding causality is important for actionable insights. Traditionally, causality has been inferred through the use of randomized experiments, where an experimenter randomly assigns a variable $A$ to values corresponding to treatment or control, and measures the \emph{causal effect} of these assignments on an outcome $Y$. However, such randomized experiments can be expensive, ethically fraught, or otherwise not possible to implement. 

Given access to observational data,
cause-effect relationships may 
be quantified via causal effects, which aim to predict what would have happened had a randomized experiment been (hypothetically) performed.
A fundamental problem in causal inference is to
estimate causal effects given access only to observational data
and a
causal model, often but not always represented by a graph. 

Causal effects may only be consistently estimated from observed data if they are \emph{identified}, that is if they 
can be uniquely expressed as a function of observed data distribution 
under the assumptions encoded by a causal model.
Sound and complete identification algorithms for causal effects have been developed using the formalism of graphical causal models
\citep{shpitserIdentificationJointInterventional2006,huangCompletenessIdentifiabilityAlgorithm2008}. 

If the causal effect of interest is not identified, more assumptions may be imposed on the causal model, or informative conclusions about the causal effect may be obtained by deriving bounds.  Alternatively, the primary dataset may be augmented with one or more informative secondary datasets. 

In this \emph{data fusion} approach, an analyst has access to multiple datasets 
on the same superpopulation.  These datasets may represent observational data on variables of interest, or represent results of randomized experiments, potentially ones that randomize treatments other than the treatments of primary interest. 
The task is to check if the desired causal effect can be computed from this collection of datasets.  Sound and complete algorithms have been developed for this problem under the assumption that the collection of dataset is given, without taking 
systematic selection into account
\citep{leeGeneralIdentifiabilityArbitrary2022,kivvaRevisitingGeneralIdentifiability2022}.

However, it is possible (indeed likely) that units are assigned to different datasets systematically. 
Consider the 
problem introduced by \cite{atheyCombiningExperimentalObservational2020},
and further explored by
\cite{ghassamiCombiningExperimentalObservational2022},
where
the goal is
to estimate the effect of class size on 
student test scores in New York, using two datasets. The first dataset is collected from the 
public school system, in which class size depends on observed and unobserved covariates 
that create confounding for the relationship between class size and test scores. The second dataset is from a randomized experiment studying the effect of class size conducted in a different population at a different point in time. 
The causal effects estimated in each dataset separately produce significantly different results. To explain this, the authors point out that the datasets likely represent different subsets of the overall student population,
In other words, there is systematic selection determining which dataset a particular student ends up in. 
Therefore, appropriate adjustments are needed before causal conclusions obtained from the two subpopulation can be compared -- or combined into a single conclusion on the overall 
population.



This kind of systematic selection is well-known in the missing data literature, under the hierarchy 
proposed in \cite{rubinInferenceMissingData1976}, where 
variables may be missing completely at random (MCAR), missing at random (MAR), or missing not at random (MNAR).  


We investigate data fusion problems under systematic selection. We define a novel graphical 
causal model 
for representing such data fusion problems,
where 
the selection mechanism 
is represented as a random variable that potentially depends on other variables in the problem in complex ways.
We use this representation to show that systematic selection exhibits a hierarchy similar to the missing data hierarchy, where selection could occur complete at random (SCAR), at random (SAR), or not at random (SNAR).  In addition, we show that applying a sound and complete algorithm that corrects for systematic selection following by a sound and complete algorithm that corrects for confounding cannot be complete in settings where both difficulties occur together.  Finally, we propose a general algorithm that aims to correct for systematic selection and confounding at once.





\section{Background}\label{sec:background}
We use upper case Roman letters to denote random variables, e.g. $A$ and vector notation for sets thereof, e.g. $\vec{A}$.  Values are lowercase letters, e.g. $a$, and sets of values are vectored lower case letters, e.g. $\vec{a}$.  Domains of random variables are denoted as $\mathfrak{X}_{Z}$, and domains of sets as $\mathfrak{X}_{\vec{Z}}$. Given a subset $\vec{A} \subseteq \vec{B}$ of variables, and values 
$\vec{b}$ of $\vec{B}$, we denote by $\vec{b}_{\vec{A}}$ the subset of values of $\vec{b}$ pertaining to variables in $\vec{A}$.

We will use the framework of graphical causal modeling in this paper, which is formulated using graph theory.
Specifically, we will consider acyclic directed mixed graphs (ADMGs) which contain directed ($\to$) and bidirected ($\leftrightarrow$) edges and no directed cycles, and a special case of ADMGs -- directed acyclic graphs (DAGs) which contain only directed edges ($\to$).
We employ the standard genealogical definitions for parents, ancestors, descendants, and districts of a variable $X$, as: $\pa_\G (X) = \{Y \mid Y \to X\}$, $\an_\G (X) = \{Y \mid Y \to \ldots \to X\} \cup \{X\}$, $\de_\G (X) = \{Y \mid X \to \ldots \to Y\} \cup \{X\}$, $\dis_\G(X) = \{X \mid X \leftrightarrow \ldots \leftrightarrow Y\} \cup \{X\}$. 
In addition, we will define $\nd_{\G}(X)$ as the set of non-descendants of $X$, which is all vertices other than those in $\de_{\G}(X)$. 
These definitions apply disjunctively over sets - e.g., for set $Z$, $\pa_\G(Z) = \cup_{X \in Z} \pa_\G (X)$.
Strict versions of these definitions exclude variables in the argument, and are denoted with prepended $s$ - e.g. for set $\vec{Z}$, $\pas_\G(\vec{Z}) = \pa_\G (\vec{Z}) \setminus \vec{Z}$.  
We denote districts of a graph as ${\cal D}(\G)$.  Districts form a partition of vertices in a graph.

Given a graph $\G$ with a vertex set $\vec{V}$ and $\vec{Z} \subseteq \vec{V}$, define the induced subgraph $\G_{\vec{Z}}$ of $\G$ to be the graph containing vertices $\vec{Z}$ and edges in $\G$ only among elements in $\vec{Z}$.


We will consider structural causal models (SCMs), which are associated with DAGs.  Given a DAG $\G$ with vertices $\vec{V}$ representing observed variables, we assume each variable $V \in \vec{V}$ is obtained via an invariant mechanism called a structural equation: $f_V : \mathfrak{X}_{\pa_{\G}(V) \cup \{ \epsilon_V \}} \mapsto 
\mathfrak{X}_V$, where $\epsilon_V$ is an exogenous unobserved random variable associated with $V$.  We will assume that all $\epsilon_V$ are mutually independent.  Some authors denote such an SCM as a non-parametric structural equation model with independent errors (NPSEM-IE) \citep{thomas13swig}.


Causal model encodes responses to variables to the intervention operation, where structural equations of a set of variables $\vec{A}$ are replaced by constants $\vec{a}$.  This operation is denoted by $\text{do}(\vec{a})$ in \citep{pearl09causality}.  A random variable response of variable $Y$ to an intervention $\text{do}(\vec{a})$ may also be written as a \emph{potential outcome} $Y(\vec{a})$. 
Interventions encode causal relationships in the sense that they allow representation of outcomes in hypothetical randomized controlled trials.  For example, the average causal effect $\mathbb{E}[Y(a) - Y(a')]$ represents the difference in outcome response, on the mean difference scale, of two experimental groups, where a treatment $A$ is set to an active ($a$) or control ($a'$) value.

Since interventions represent hypothetical changes in a causal system, responses to interventions are not always available.  An important task in causal inference is \emph{identification,} ensuring that interventional distributions $p(\vec{Y}(\vec{a})) = p(\vec{Y} | \text{do}(\vec{a}))$ are functions of the observed distribution  $p(\vec{V})$. 

It is well known that if all variables $\vec{V}$ in an SCM with independent errors are observed, every interventional distribution $p(\vec{V} \setminus \vec{A} | \text{do}(\vec{a}))$ is identified via the truncated factorization known as the \emph{g-formula} \citep{robins86new}:
{\small
\begin{align*}
p(\vec{V} \setminus \vec{A} \mid \text{do}(\vec{a}))
= \prod_{V \in \vec{V} \setminus \vec{A}}
p(V \mid \pa_{\cal G}(V)) \vert_{\vec{a}_{\pa_{\cal G}(V) \cap \vec{A}}}
\end{align*}
}A simple version of the g-formula is the adjustment formula, which yields the average causal effect of the treatment $A$ on the outcome $Y$ if all confounders of $A$ and $Y$ are observed as a vector $\vec{C}$:
$\mathbb{E}[\mathbb{E}[Y | a, \vec{C}] - \mathbb{E}[Y | a', \vec{C}]]$.  Note that the g-formula with the empty $\vec{A}$ also holds and implies that the observed data distribution $p(\vec{V})$ may be written as $\prod_{V \in \vec{V}} p(V | \pa_{\G}(V))$, and thus is Markov with respect to the DAG $\G$ \citep{pearl88probabilistic,lauritzen96graphical}.

\section{The ID Algorithm
}

If some variables in the model are unobserved, identification of interventional distributions becomes considerably more complicated, with some distributions not being identified at all, and distributions that are identified being potentially more complex functionals of the observed data distribution than the g-formula.  General identification algorithms given the observed marginal distribution $p(\vec{V})$ derived from a hidden variable causal model represented by a DAG ${\cal G}(\vec{V} \cup \vec{H})$, where $\vec{H}$ represent hidden variables have been characterized via the ID algorithm \citep{tianIdentificationCausalEffects2002,shpitserIdentificationJointInterventional2006}.  The ID algorithm takes as input an ADMG ${\cal G}(\vec{V})$ derived from ${\cal G}(\vec{V} \cup \vec{H})$ via the latent projection operation \citep{verma90equiv}, the observed data distribution $p(\vec{V})$, and disjoint variable sets $\vec{A},\vec{Y}$ representing the interventional distribution $p(\vec{Y} | \text{do}(\vec{a}))$ of interest.  The ID algorithm outputs either the identifying functional for $p(\vec{Y} | \text{do}(\vec{a}))$ in terms of $p(\vec{V})$, or the token ``not identified.''  It is known that the ID algorithm is both sound (outputs correct identifying functionals in all cases) and complete (whenever it outputs ``not identified,'' the corresponding distribution is indeed not a unique function of $p(\vec{V})$ in the model) \citep{huang06do,shpitserIdentificationJointInterventional2006}.

Just as the g-formula is a one line formula representing a modified DAG factorization, the ID algorithm may be formulated as a one line formula representing a modified nested Markov factorization of a latent projection ADMG \citep{richardsonNestedMarkovProperties2023}.
We now briefly review the ID algorithm formulated in this way in terms of Markov kernels, and the fixing operator.



A Markov kernel $q_{\vec{V}} (\vec{V} | \vec{W})$ is a function that maps values of $\mathfrak{X}_{\vec{W}}$ to normalized densities over $\vec{V}$, and may be viewed as a generalization of a conditional distribution, but is not necessarily constructed by applying a conditioning operation to a joint distribution.
For example, the kernel $q_Y(Y | a) \equiv \sum_{\vec{C}} p(Y | a, \vec{C}) p(\vec{C})$ which appears in the adjustment formula is not in general equal to $p(Y | a)$.

Given a Markov kernel, additional kernels may be constructed by the conditioning and marginalization operators, which are defined in the natural way:
{\small
\[q_{\vec{V}} (\vec{B} | \vec{W}) \equiv \sum_{\vec{V}\setminus \vec{B}}q_{\vec{V}} (\vec{V} | \vec{W}); q_{\vec{V}} (\vec{V} \setminus \vec{B} | \vec{B} \cup \vec{W}) = \frac{q_{\vec{V}} (\vec{V} | \vec{W})}{q(\vec{B} | \vec{W})}.
\]
}

The fixing operator \citep{richardsonNestedMarkovProperties2023} is an operator applied to graphs and kernels  
that ``removes'' vertices and random variables by rendering them ``fixed.''
A relevant generalization of an ADMG called a conditional ADMG (CADMG) $\G(\vec{V},\vec{W})$ contains two types of vertices: random (denoted by $\vec{V}$), and fixed (denoted by $\vec{W}$).  Fixed vertices cannot have any edges with an arrowhead into them, and will be displayed as squares in graphs.  Note that an ADMG is a CADMG where $\vec{W}$ is empty.
Kernels $q_{\vec{V}}(\vec{V} | \vec{W}=\vec{w})$ 
and CADMGs $\G(\vec{V},\vec{W})$ will represent interventional distributions $p(\vec{V} | \text{do}(\vec{w}))$, and their Markov structure, respectively.
\emph{Mutilated graphs} used in \citep{pearl09causality} to describe interventional contexts may be viewed as CADMGs.

For a CADMG $\G(\vec{V},\vec{W})$, a vertex $V \in \vec{V}$ is fixable if there is no other vertex that is both a descendant and in the same district in $\G$, that is if $\dis_{\G}(V) \cap \de_{\G}(V) = \emptyset$.
If $V$ is fixable we define a new CADMG $\G(\vec{V} \setminus \{ V \}, \vec{W} \cup \{ V \}) \equiv \phi_V(\G(\vec{V},\vec{W}))$ by means of a fixing operator $\phi_V$ which renders $V$ a fixed vertex, removes all edges in $\G(\vec{V},\vec{W})$ with an arrowhead into $V$, and keeps all other vertices and edges unaltered.

Given a non-empty sequence of vertices $\pi$, we define $h(\pi)$ to be its first element, and $t(\pi)$ to be the subsequence of $\pi$ containing all elements after the first.
A sequence $\pi$ of vertices in $\vec{V}$ is said to be valid (or fixable) in a CADMG $\G(\vec{V},\vec{W})$ if either the sequence is empty, 
or $h(\pi)$ is fixable in $\G$ and $t(\pi)$ is fixable in $\phi_{h(\pi)}(\G)$.  Any two sequences on the same set of vertices $\vec{S} \subseteq \vec{V}$ fixable in $\G(\vec{V},\vec{W})$ yield the same CADMG, allowing us to write $\phi_{\vec{S}}(\G)$ to mean ``obtain the CADMG after applying the fixing operator to $\vec{S}$ via any valid sequence''.
A set $\vec{R} \subseteq \vec{V}$ is said to be reachable in an ADMG $\G$ with vertices $\vec{V}$ if there exists a valid fixing sequence for $\vec{V} \setminus \vec{R}$.  Given a set $\vec{R} \subseteq \vec{V}$ that is not reachable in $\G$, the unique smallest reachable superset of $\vec{R}$ that is reachable of $\G$ is called a \emph{reachable closure of $\vec{R}$}, or simply closure of $\vec{R}$, and denoted by $\cl_{\G}(\vec{R})$.

Given a kernel $q_{\vec{V}}(\vec{V} | \vec{W})$ associated with a CADMG $\G(\vec{V},\vec{W})$ where $V$ is fixable, define $\phi_V(q_{\vec{V}}; \G)$ to be the kernel
$q_{\vec{V} \setminus \{V\}}(\vec{V} \setminus \{V\} \mid \vec{W} \cup \{ W \}) \equiv \frac{q_{\vec{V}}(\vec{V} \mid \vec{W})}{q_{\vec{V}}(V \mid \nd_{\G}(V),\vec{W})}$.

Given a CADMG $\G(\vec{V},\vec{W})$, kernel $q_{\vec{V}}(\vec{V} | \vec{W})$ and a sequence $\pi$ of vertices in $\vec{V}$ fixable in $\G$, we define $\phi_{\pi}(q_{\vec{V}},\G)$ as $q_{\vec{V}}$ if $\pi$ is the empty sequence, and otherwise as $\phi_{t(\pi)}(\phi_{h(\pi)}(q_{\vec{V}}; \G); \phi_{h(\pi)}(\G))$ otherwise.

Given $p(\vec{V})$ which is a marginal distribution obtained from $p(\vec{V} \cup \vec{H})$ which is Markov with respect to a DAG $\G(\vec{V} \cup \vec{H})$, and corresponding latent projection ADMG $\G(\vec{V})$ of $\G(\vec{V} \cup \vec{H})$, and any two sequences $\pi_1,\pi_2$ on $\vec{S} \subseteq \vec{V}$ valid in $\G$, $\phi_{\pi_1}(q_{\vec{V}}; \G) = \phi_{\pi_2}(q_{\vec{V}}; \G)$.  We thus denote the resulting kernel by $\phi_{\vec{S}}(q_{\vec{V}}; \G)$.

The identification of interventional distributions $p(\vec{Y} | \text{do}(\vec{a}))$ for any disjoint subsets $\vec{A},\vec{Y}$ of $\vec{V}$ in a hidden variable causal model associated with a DAG $\G(\vec{V} \cup \vec{H})$ with a latent projection $\G(\vec{V})$ has been characterized is follows.  Let $\vec{Y}^* = \an_{\G(\vec{V})_{\vec{V} \setminus \vec{A}}}(\vec{Y})$, and define $\G^*$ as $\G(\vec{V})_{\vec{Y}^*}$.  Then
$p(\vec{Y} | \text{do}(\vec{a}))$ is identified if and only if every element in ${\cal D}(\G(\vec{V})_{\vec{Y}^*})$ is reachable in ${\cal G}(\vec{V})$.  If so, we have
{\small
\begin{align}
\label{eqn:do-fact}
p(\vec{Y} | \text{do}(\vec{a}))
&= \sum_{\vec{Y}^* \setminus \vec{Y}}
\prod_{\vec{D} \in {\cal D}(\G(\vec{V})_{\vec{Y}^*})}
\!\!
p(\vec{D} | \text{do}(\pas_{\G(\vec{V})}(\vec{D})))\\
\notag
&=
\sum_{\vec{Y}^* \setminus \vec{Y}}
\prod_{\vec{D} \in {\cal D}(\G(\vec{V})_{\vec{Y}^*})}
\!\!
\phi_{\vec{V} \setminus \vec{D}}(p(\vec{V}); \G(\vec{V}))
\vert_{\vec{A} = \vec{a}}.
\end{align}
}

\section{
The GID Algorithm: A Review
}
\label{sec:prior-work}
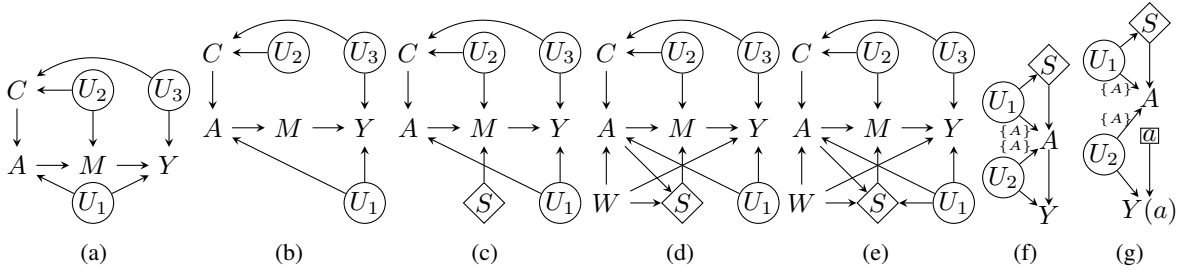
\begin{figure*}[h]
\centering
\begin{subfigure}[b]{0.15\textwidth}
\centering
        \begin{tikzpicture}
		\tikzstyle{block} = [draw, circle, inner sep=.5pt]
		\tikzstyle{block2} = [draw, rectangle, inner sep=2.5pt, fill=lightgray]
		\tikzstyle{input} = [coordinate]
		\tikzstyle{output} = [coordinate]
            \tikzset{edge/.style = {->,> = latex'}}
            \node[] (c) at  (0, 0) {$C$};
            \node[] (a) at  (0, -1) {$A$};
			\node[] (m) at  (1, -1) {$M$}; 
			\node[] (y) at  (2, -1) {$Y$}; 
			\node[block] (u2) at  (1, 0) {$U_2$}; 
			\node[block] (u3) at  (2, 0) {$U_3$}; 
			\node[block] (u1) at  (1, -1.5) {$U_1$}; 
            
            \draw[-stealth] (a) to (m);
            \draw[-stealth] (c) to (a);
            \draw[-stealth] (m) to (y);
            \draw[-stealth] (u1) to (a);
            \draw[-stealth] (u1) to (y);
            \draw[-stealth] (u2) to (c);
            \draw[-stealth] (u2) to (m);
            \draw[-stealth][bend right=35] (u3) to (c);
            \draw[-stealth] (u3) to (y);
                                   	                                      
        \end{tikzpicture}
        \caption{}
        \label{fig:antibiotics-study-1}
\end{subfigure}%
\begin{subfigure}[b]{0.15\textwidth}

\centering
        \begin{tikzpicture}
		\tikzstyle{block} = [draw, circle, inner sep=.5pt]
		\tikzstyle{block2} = [draw, rectangle, inner sep=2.5pt, fill=lightgray]
		\tikzstyle{input} = [coordinate]
		\tikzstyle{output} = [coordinate]
            \tikzset{edge/.style = {->,> = latex'}}
            \node[] (c) at  (0, 0) {$C$};
            \node[] (a) at  (0, -1) {$A$};
			\node[] (m) at  (1, -1) {$M$}; 
			\node[] (y) at  (2, -1) {$Y$}; 
			\node[block] (u2) at  (1, 0) {$U_2$}; 
			\node[block] (u3) at  (2, 0) {$U_3$}; 
			\node[block] (u1) at  (2, -2) {$U_1$}; 
            
            \draw[-stealth] (a) to (m);
            \draw[-stealth] (c) to (a);
            \draw[-stealth] (m) to (y);
            \draw[-stealth] (u1) to (a);
            \draw[-stealth] (u1) to (y);
            \draw[-stealth] (u2) to (c);
            \draw[-stealth][bend right=35] (u3) to (c);
            \draw[-stealth] (u3) to (y);

        \end{tikzpicture}
        \caption{}
        \label{fig:antibiotics-study-2}
\end{subfigure}%
\begin{subfigure}[b]{0.15\textwidth}
\centering
        \begin{tikzpicture}
		\tikzstyle{block} = [draw, circle, inner sep=.5pt]
		\tikzstyle{selector} = [draw, diamond, inner sep=.5pt]
		\tikzstyle{input} = [coordinate]
		\tikzstyle{output} = [coordinate]
            \tikzset{edge/.style = {->,> = latex'}}
            
            \node[] (c) at  (0, 0) {$C$};
            \node[] (a) at  (0, -1) {$A$};
			\node[] (m) at  (1, -1) {$M$}; 
			\node[] (y) at  (2, -1) {$Y$}; 
			\node[selector] (s) at  (1, -2) {$S$}; 
			\node[block] (u2) at  (1, 0) {$U_2$}; 
			\node[block] (u3) at  (2, 0) {$U_3$}; 
			\node[block] (u1) at  (2, -2) {$U_1$}; 
            
            \draw[-stealth] (a) to (m);
            \draw[-stealth] (c) to (a);
            \draw[-stealth] (m) to (y);
            \draw[-stealth] (u1) to (a);
            \draw[-stealth] (u1) to (y);
            \draw[-stealth] (u2) to (c);
            \draw[-stealth] (u2) to (m);
            \draw[-stealth] (s) to (m);
            \draw[-stealth][bend right=35] (u3) to (c);
            \draw[-stealth] (u3) to (y);
        \end{tikzpicture}
        \caption{}
        \label{fig:antibiotics-scar}
\end{subfigure}%
\begin{subfigure}[b]{0.15\textwidth}
\centering
        \begin{tikzpicture}
		\tikzstyle{block} = [draw, circle, inner sep=.5pt]
		\tikzstyle{selector} = [draw, diamond, inner sep=.5pt]
		\tikzstyle{input} = [coordinate]
		\tikzstyle{output} = [coordinate]
            \tikzset{edge/.style = {->,> = latex'}}
            
            \node[] (c) at  (0, 0) {$C$};
            \node[] (a) at  (0, -1) {$A$};
			\node[] (m) at  (1, -1) {$M$}; 
			\node[] (y) at  (2, -1) {$Y$}; 
			\node[] (w) at  (0, -2) {$W$}; 
			\node[selector] (s) at  (1, -2) {$S$}; 
			\node[block] (u2) at  (1, 0) {$U_2$}; 
			\node[block] (u3) at  (2, 0) {$U_3$}; 
			\node[block] (u1) at  (2, -2) {$U_1$}; 
            
            \draw[-stealth] (a) to (m);
            \draw[-stealth] (c) to (a);
            \draw[-stealth] (m) to (y);
            \draw[-stealth] (u1) to (a);
            \draw[-stealth] (u1) to (y);
            \draw[-stealth] (u2) to (c);
            \draw[-stealth] (u2) to (m);
            \draw[-stealth] (s) to (m);
            \draw[-stealth] (w) to (s);
            \draw[-stealth] (w) to (a);
            \draw[-stealth] (w) to (y);
            \draw[-stealth] (a) to (s);
            \draw[-stealth][bend right=35] (u3) to (c);
            \draw[-stealth] (u3) to (y);
        \end{tikzpicture}
        \caption{}
        \label{fig:antibiotics-sar}
\end{subfigure}%
\begin{subfigure}[b]{0.15\textwidth}
\centering
        \begin{tikzpicture}
		\tikzstyle{block} = [draw, circle, inner sep=.5pt]
		\tikzstyle{selector} = [draw, diamond, inner sep=.5pt]
		\tikzstyle{input} = [coordinate]
		\tikzstyle{output} = [coordinate]
            \tikzset{edge/.style = {->,> = latex'}}
            
            \node[] (c) at  (0, 0) {$C$};
            \node[] (a) at  (0, -1) {$A$};
			\node[] (m) at  (1, -1) {$M$}; 
			\node[] (y) at  (2, -1) {$Y$}; 
			\node[] (w) at  (0, -2) {$W$}; 
			\node[selector] (s) at  (1, -2) {$S$}; 
			\node[block] (u2) at  (1, 0) {$U_2$}; 
			\node[block] (u3) at  (2, 0) {$U_3$}; 
			\node[block] (u1) at  (2, -2) {$U_1$}; 
            
            \draw[-stealth] (a) to (m);
            \draw[-stealth] (c) to (a);
            \draw[-stealth] (m) to (y);
            \draw[-stealth] (u1) to (a);
            \draw[-stealth] (u1) to (y);
            \draw[-stealth] (u2) to (c);
            \draw[-stealth] (u2) to (m);
            \draw[-stealth] (s) to (m);
            \draw[-stealth] (w) to (s);
            \draw[-stealth] (w) to (a);
            \draw[-stealth] (w) to (y);
            \draw[-stealth] (u1) to (s);
            \draw[-stealth] (a) to (s);            
            \draw[-stealth][bend right=35] (u3) to (c);
            \draw[-stealth] (u3) to (y);
        \end{tikzpicture}
        \caption{}
        \label{fig:antibiotics-snar}
\end{subfigure}%
\begin{subfigure}[b]{0.08\textwidth}
\centering
        \begin{tikzpicture}
		\tikzstyle{block} = [draw, circle, inner sep=.5pt]
		\tikzstyle{selector} = [draw, diamond, inner sep=.5pt]
		\tikzstyle{input} = [coordinate]
		\tikzstyle{output} = [coordinate]
            \tikzset{edge/.style = {->,> = latex'}}
            
            \node[selector] (s) at  (0, 0) {$S$};
            \node[inner sep=0pt] (a) at  (0, -1) {$A$};
			\node[inner sep=0pt] (y) at  (0, -2) {$Y$}; 
            \node[block] (u1) at  (-0.6, -.5) {$U_1$};
            \node[block] (u2) at  (-0.6, -1.5) {$U_2$};
            \draw[-stealth] (u2) to (y);
            \draw[-stealth] (u2) to (a);
            \draw[-stealth] (u1) to (s);
            \draw[-stealth] (u1) to (a);
            \draw[-stealth] (s) to
            node[below left,yshift=-0.1cm,xshift=-0.1cm]{\tiny $\{A\}$} (a);
            \draw[-stealth] (a) to
            node[above left,yshift=0.2cm,xshift=-0.1cm]{\tiny $\{A\}$}
            (y);
            
        \end{tikzpicture}
        \caption{}
        \label{fig:id-double-bow}
\end{subfigure}%
\begin{subfigure}[b]{0.08\textwidth}
\centering
    \begin{tikzpicture}
		\tikzstyle{block} = [draw, circle, inner sep=.5pt]
		\tikzstyle{fixed} = [draw, rectangle, inner sep=.7pt]
		\tikzstyle{selector} = [draw, diamond, inner sep=.5pt]
		\tikzstyle{input} = [coordinate]
		\tikzstyle{output} = [coordinate]
            \tikzset{edge/.style = {->,> = latex'}}
            
            \node[selector] (s) at  (0,0.5) {$S$};
            \node[inner sep=0pt] (a) at  (0, -.5) {$A$};
            \node[block] (u1) at  (-0.6, 0) {$U_1$};
            \node[block] (u2) at  (-0.6, -1.25) {$U_2$};
            \node[fixed] (aa) at  (0, -1) {$a$};
			\node[inner sep=0pt] (y) at  (0, -2) {$Y(a)$}; 
            \draw[-stealth] (s) to
            node[below left,yshift=-0.1cm,xshift=-0.1cm]{\tiny $\{A\}$}
            (a);
            \draw[-stealth] (aa)
            node[above left,yshift=0.0cm,xshift=-0.1cm]{\tiny $\{A\}$}
            to (y);
            \draw[-stealth] (u2) to (y);
            \draw[-stealth] (u2) to (a);
            \draw[-stealth] (u1) to (s);
            \draw[-stealth] (u1) to (a);
            
        \end{tikzpicture}
        \caption{}
        \label{fig:id-double-bow-swig}
\end{subfigure}
            
 


\caption{
Example graphs illustrating systematic selection.
Unobserved variables are denoted with a enclosing circle, while observed variables are not.
The (observed) selection variable $S$ is denoted by a diamond. 
}
\label{fig:prior-work-examples}
\end{figure*}

Consider an observational study ${\cal S}_1$ which aimed to assess antibiotic effectiveness for treating infections. In practice, patients are prone to non-compliance, where the full course is not taken as instructed. We represent this causal structure with antibiotic prescription $A$, compliance level $M$, and a longer term outcome such as hospital 30-day readmission. Unobserved common causes $U_1, U_2, U_3$ create confounding for several study variables, but a pre-treatment proxy $C$ of $U_2, U_3$ is observed. The ID algorithm
demonstrates that $p(Y(a))$ is not identified from data on observed variables in the model shown in \cref{fig:antibiotics-study-1}.

Now consider an  experimental study ${\cal S}_2$, in which the investigator is able to 
control compliance level based on assigned treatment. This results in \cref{fig:antibiotics-study-2}, in which the edge $A \to M$ is kept, but the edge $U_2 \to M$ is removed, representing the situation in which compliance status only depends on the prescribed medication (and not, say, the patient's mood or appetite). It turns out that since both datasets are drawn from the same population, the causal effect $p(Y(a))$ may be obtained from the combined dataset as $\sum_{c, m, \tilde{a}} \E_2[Y \mid m, \tilde{a}] p_1 (\tilde{a}) \sum_c p_1 (m \mid a, c) p_1 (c)$ where subscripts indicate the dataset used for computation of that term. \citet{leeGeneralIdentifiabilityArbitrary2019} provided a sound and complete graphical \emph{gID algorithm} for the identification of such queries, in the special case where experimental datasets are formulated only using the $\text{do}(.)$ operator, rather than more complex policy interventions like in the above example that may depend on other variables in the problem.
We now reformulate the gID algorithm using the fixing operator $\phi$ (see also \cite{leeCausalEffectIdentifiability2020,leeIdentificationMethodsArbitrary2020a}).


Given the interventional distribution of interest $p(\vec{Y}(\vec{a}))$, and a latent projection ADMG $\G(\vec{V})$ representing a hidden variable causal model,
let the available datasets be denoted $\{p_i (\vec{V} \mid \doo(\vec{Z}_i))\}_{i=1}^K$ with corresponding CADMGs $\{\G_i(\vec{V} \setminus \vec{Z}_i, \vec{Z}_i) \}_{i=1}^K$.  Note that these CADMGs need not have been obtained via the fixing operator.
Then, if for each  $\vec{D} \in {\cal D} (\G(\vec{V})_{\vec{Y}^*})$ there exists $j_{\vec{D}}$ such that $\vec{D}$ is reachable in 
$\G_{j_{\vec{D}}}(\vec{V} \setminus \vec{Z}_{j_{\vec{D}}}, \vec{Z}_{j_{\vec{D}}})$,
then output
{\small
\[
p(\vec{D} \mid \doo(\pas_{\G(\vec{V})}(\vec{D}))) = \phi_{(\vec{V} \setminus \vec{Z}_{j_{\vec{D}}}) \setminus \vec{D}} (p_{j_{\vec{D}}}; \G_{j_{\vec{D}}}).
\]
}The causal effect is then obtained by the usual district factorization (\ref{eqn:do-fact}) as in the ID algorithm.

The primary limitation of this and related prior work is that the selection process is \emph{under-specified}.  In particular, the selection process is either not represented at all (as in the gID algorithm), or only enters the model via indicators that index domains, but which are not themselves random variables that are a part of the model. In addition to data fusion problems, the resulting models have been used to address questions of transportability \citep{bareinboimTransportabilityCausalEffects2012}, selection bias \citep{bareinboimRecoveringCausalEffects2015}, or as a general representation of interventional and observational domains in causal inference \citep{dawidDecisiontheoreticFoundationsStatistical2021}.

Since domain selectors are not treated as full random variables, the resulting models do not yield a single coherent data likelihood, which is an impediment to statistical inference.  A more serious issue, however, is that by not modeling the selection process explicitly, it is not possible to represent \emph{systematic selection}, which is often how units from a single superpopulation are assigned to different experimental and observational settings in practice.


To address these issues, we will represent the selection process as a random variable $S$.  In this paper, this variable indexes intervention status of its children in the graph, but may also indicate changes in structural equations representing domain differences.  Selectors may potentially share common (and potentially unobserved) parents with other variables, creating potential confounding.

\section{Causal Models For 
Selection}

In this section we describe how to augment SCMs with an additional selector random variable $S$ that governs whether certain variables that are its children in the causal graph are intervened on or keep their natural behavior.




\begin{dfn}[Context Selected SCM]\label{dfn:ics-scm}
Given an SCM with independent errors associated with a DAG $\G(\vec{V})$, a context selected SCM (CS-SCM) associated with a DAG $\bar{\G}(\vec{V} \cup \{ S \})$, such that $\ch_{\bar{\G}}(S) \neq \emptyset$ is defined as follows:
\begin{itemize}
\item $S \equiv \{ \langle S^e_V, S^v_V \rangle : V \in \ch_{\bar{\G}}(S) \}$, where for all $S^e_V$, $\mathfrak{X}_{S^e_V} = \{ 0, 1 \}$, and for all $S^v_V$, $\mathfrak{X}_{S^v_V} = \mathfrak{X}_V$.
\item Every $V \in \vec{V} \setminus \ch_{\bar{\G}}(S)$ maintains its structural equation $f_V(\pa_{\G}(V),\epsilon_V)$ from the original SCM.
\item For every $V \in \vec{V} \cap \ch_{\bar{\G}}(S)$, the structural equation $\tilde{f}_V$
for $V$ in the CS-SCM is defined in terms of $S$ and the structural equation $f_V(\pa_{\G}(V),\epsilon_V)$
for $V$ in the original SCM as:
{\small
\begin{align*}
V \underbrace{\gets
\begin{cases}
f_V(\pa_{\G}(V),\epsilon_V) & \text{ if } S^e_V = 0 \\
S^v_V & \text{ if } S^e_V = 1
\end{cases}}_{\tilde{f}_V(\pa_{\G}(V), S, \epsilon_V)}
\end{align*}
}
\end{itemize}
\end{dfn}
In words, the CS-SCM is an SCM augmented with a selector variable $S$ consisting of intervention indicators $S^e_V$ and intervention values $S^v_V$ for every child $V$ of $S$.  If $S^e_V=1$, the child $V$ of $S$ is intervened on, and $S^v_V$ indicates the value of the intervention.  If $V$ is not intervened on, $V$ acts as a usual function of its parents other than $S$ via its structural equation $f_V$ from the original SCM.
To simplify notation, when discussing CS-SCMs, we will include $S$ in the set $\vec{V}$, and denote values of $S$ as a vector pair $\langle \vec{s}^e, \vec{s}^v \rangle$.

We note that the relationship of the selector $S$ and its children we describe here is closely related to the decision-theoretic framework in \citep{dawidDecisiontheoreticFoundationsStatistical2021} and selection diagrams applied to selection bias and transportability problems in \citep{bareinboimRecoveringCausalEffects2015,bareinboimTransportabilityCausalEffects2012}.
Unlike these approaches, we treat $S$ as a random variable that is allowed to be causally influenced by other variables in the system, and thus as a full part of the model.

In subsequent developments we will make use of the following definition.
Given a set of variables $\vec{D} \subseteq \vec{V}$ in an CS-SCM, we say
a value $s = \langle\vec{s}^e, \vec{s}^v \rangle$ is \emph{laidback for $\vec{D}$} if $\vec{s}^e_{\vec{D}} = \vec{0}$.
We say $s$ is \emph{serious for $\vec{D}$} if it is not laidback for $\vec{D}$.
We will also use the special notation $s = \emptyset$ to denote any value set $(\vec{s}^e, \vec{s}^v)$, where
$\vec{s}^e_{\vec{V}} = \vec{0}$.  Such value sets correspond to the observational context of an CS-SCM, the context where no interventions took place.

The CS-SCM exhibits \emph{context-specific} independencies, whereby a variable is no longer a function of others at particular values of a parent. \cite{pensarLabeledDirectedAcyclic2015} provide a succinct graphical representation of this information.
\begin{dfn}[Labelled selection DAG]\label{dfn:labelled_selection_dag}
Let $\bar{\G}(\vec{V})$ denote a DAG associated with an CS-SCM with edges $\vec{E}$.
For each edge $E = (A \to B) \in \vec{E}$ {such that
$S \in \pa_{\bar{\G}}(B)$ and $S \neq A$, we attach a label $L_E = \{ B \}$.}
Then, $\bar{G}^{[]}(\vec{V})$ with edge labels $\vec{L} = \cup_{E} L_E$ denotes a labelled selection DAG (LS-DAG)
associated with the CS-SCM.
\end{dfn}
Given an LS-DAG $\bar{\G}^{[]}$ and a value $s$ of $S$, we define the DAG $\bar{\G}^{[s]}$ to be an edge subgraph of $\bar{\G}^{[]}$ where any edge with a label $\{ V \}$ is removed if $S$ is serious for $V$, and removes all other edge labels.  Note that the graph
$\bar{\G}^{[\emptyset]}$ removes all edge labels but no edges.

The following section describes how
the Markov structure implied by the CS-SCM, in both observational and interventional contexts, may be recovered from such edge subgraph of LS-DAGs,
and their interventional analogues.

\subsection{SWIGs and Context Selected SWIGs}

Given a DAG $\G(\vec{V})$ representing an SCM, and a set of values $\vec{a}$ of treatments $\vec{A}$, a single world intervention graph (SWIG) \citep{thomas13swig} $\G(\vec{V}(\vec{a})) = \G(\vec{a})$ is a graph obtained from $\G$ by creating a ``random'' version $A$ and a ``fixed'' version $a$ of every $A \in \vec{A}$ vertex in $\G$, with every random version $A$ inheriting all edges with arrowheads into $A$ in $\G$, and every fixed version $a$ inheriting all outgoing edges from $A$ in $\G$.  In addition every vertex $V$ in $\G(\vec{a})$ is relabelled as $V(\vec{a})$ to signify that these vertices represent counterfactual random variables.

A SWIG $\G(\vec{a})$ represents Markov structure of an interventional distribution $p(\vec{V}(\vec{a}))$ obtained from an SCM with a graph $\G$ via the d-separation criterion \citep{thomas13swig}.  Standard genealogic relations generalize readily to SWIGs.

We consider a special case of SWIGs applicable to our setting.  Given an CS-SCM associated with an LS-DAG $\bar{\G}^{[]}$ with vertices $\vec{V}$, if $S \not\in \vec{A}$, we represent $p(\vec{V}(\vec{a}))$ via a \emph{labelled selection SWIG (LS-SWIG)} $\bar{\G}^{[]}(\vec{a})$, which is obtained by employing the standard SWIG construction while keeping the labels in $\bar{\G}^{[]}$.

If $S \in \vec{A}$, let $s$ be the value of $S$ in $\vec{a}$.  Then we define a context-specific SWIG $\bar{\G}^{[s]}(\vec{a})$ as follows.
Given an LS-SWIG $\bar{\G}^{[]}(\vec{a})$, we remove any random vertex in $\ch_{\bar{\G}(\vec{a})}(s)$ that $s$ is serious for, and all edges adjacent to such vertices.
This operation represents the fact that such a vertex corresponds to a constant.

Despite removal of certain vertices and edges, context-specific SWIGs correctly represent independences in interventional distributions obtained from CS-SCMs due to the following result.

\begin{restatable}{thm}{csswigs}
Given an CS-SCM associated with $\bar{\G}^{[]}(\vec{V})$, and any $\vec{A} \subseteq \vec{V}$ such that $S \in \vec{A}$ (including $\vec{A} = \{ S \}$),
any d-separation statement in $\bar{\G}^{[s]}(\vec{a})$,
for $s$ consistent with $\vec{a}$, implies a conditional independence statement in $p(\vec{V}(\vec{a}))$.
\end{restatable}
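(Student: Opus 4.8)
The plan is to reduce the statement to the soundness of the ordinary SWIG d-separation criterion \citep{thomas13swig}, by exhibiting an ordinary NPSEM-IE whose post-intervention distribution coincides with $p(\vec V(\vec a))$ and whose SWIG is an edge-subgraph of $\bar\G^{[s]}(\vec a)$. The starting observation is that a CS-SCM is, formally, just an NPSEM-IE over $\vec V$ (with $S$, or its components $S^e_V, S^v_V$, treated as ordinary variables); the only unusual feature is the case-split form of $\tilde{f}_V$ for $V \in \ch_{\bar\G}(S)$. Consequently the ordinary SWIG $\bar\G(\vec a)$ of this model already soundly encodes the conditional independences of $p(\vec V(\vec a))$, so it suffices to establish the claim for the generally smaller graph $\bar\G^{[s]}(\vec a)$.

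Fixing the value $s$ carried by $\vec a$, I would build a new NPSEM-IE $M_s$ on the same exogenous variables with structural equations: $f_V$ unchanged for $V \notin \ch_{\bar\G}(S)$; the equation $f_V(\pa_\G(V),\epsilon_V)$ — which does not reference $S$ — for a child $V$ that $s$ is laid-back for (the $S^e_V=0$ branch of $\tilde{f}_V$); and the constant $s^v_V$ for a child $V$ that $s$ is serious for (the $S^e_V=1$ branch evaluated at $S=s$), substituting $s^v_V$ for $V$ wherever $V$ occurs as an argument elsewhere. The key identity to verify is $p_{M_s}(\vec V(\vec a)) = p(\vec V(\vec a))$: once $S=s$ is substituted into the original structural equations, the two systems agree term by term on $\vec V$. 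Moreover $M_s$ is again an NPSEM-IE whose DAG is $\bar\G$ with the edge $S \to V$ deleted for every laid-back child $V$, and with every serious child $V$ made isolated — no incoming edges (its equation is a constant) and no outgoing edges (it has been substituted out).

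The main work is then a graph comparison. I would check that the SWIG $\G_{M_s}(\vec a)$, after discarding the isolated serious-child vertices, has the same vertex set as $\bar\G^{[s]}(\vec a)$ and an edge set contained in that of $\bar\G^{[s]}(\vec a)$: the two differ at most by the edges $s \to V$ into laid-back children, which $\bar\G^{[s]}(\vec a)$ harmlessly retains, while the labelled edges $A \to V$ into serious children are absent from both (removed together with the vertex on one side, removed as $\{V\}$-labelled edges on the other). Since any d-separation holding in a graph also holds in every edge-subgraph of it, a d-separation in $\bar\G^{[s]}(\vec a)$ holds in $\G_{M_s}(\vec a)$; applying ordinary SWIG soundness to $M_s$ yields the corresponding conditional independence in $p_{M_s}(\vec V(\vec a)) = p(\vec V(\vec a))$, which is the claim.

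I expect the delicate step to be the removal of the serious-child vertices: one must argue that, since such a vertex is a deterministic constant in $p(\vec V(\vec a))$, deleting it neither loses a genuine dependence (as a chain node it transmits none, and being constant it transmits none even as a conditioned collider) nor — the real concern — creates spurious separations among the remaining vertices. Routing everything through $M_s$, where those vertices are literally isolated, is exactly what makes this bookkeeping clean; what remains is to confirm that the edge-for-edge correspondence goes through uniformly, including corner cases such as a member of $\vec A$ other than $S$ that is itself a child of $S$, for which ``consistency of $s$ with $\vec a$'' is what prevents a clash between the intervention and the value forced by $S$.
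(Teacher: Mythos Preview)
Your proposal is correct, but it takes a genuinely different route from the paper. The paper's proof is a one-line appeal to existing machinery: it invokes the soundness of the CSI-separation criterion of \cite{boutilierContextSpecificIndependenceBayesian1996}, the definition of seriousness in the CS-SCM, and the SWIG global Markov results of \citep{thomas13swig}, and declares the result a direct extension. In other words, the paper treats the CS-SCM as an instance of a Bayesian network with context-specific independence, for which a d-separation-style criterion is already known to be sound, and then ports this to the interventional setting via SWIGs.

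Your approach instead bypasses CSI-separation entirely: you freeze $S$ at the given value $s$, construct the specialised NPSEM-IE $M_s$ by collapsing each branch of $\tilde f_V$, and reduce to the \emph{ordinary} SWIG soundness theorem via an edge-subgraph comparison. This is more elementary and self-contained---a reader need not know anything about labelled DAGs or CSI-separation---and it makes explicit exactly which edges are dropped and why the removal of serious-child vertices is harmless (they are literal constants in the specialised model). The paper's route is terser but leans on more external results; yours is longer but stands on its own. The corner case you flag (a serious child $V$ that is also in $\vec A$, with $s^v_V$ possibly differing from $\vec a_V$) is real and not explicitly addressed by the paper either; your instinct to resolve it through the ``consistency'' hypothesis is reasonable, though you should be aware that the paper appears to use ``$s$ consistent with $\vec a$'' only to mean $s = \vec a_S$, so you may need to argue that case directly (the random $V(\vec a)$ is then a constant, and the fixed $v$ still carries $\vec a_V$ downstream, which your $M_s$ construction must not override).
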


\subsection{The Selection Hierarchy and Context Selected G-formula}

Treating the selector $S$ as a part of the model allows us to represent systematic selection via a hierarchy similar to the missing data hierarchy \citep{rubinInferenceMissingData1976}, with
selected completely at random (SCAR), selected at random (SAR), and selected not at random (SNAR) models.
In particular, we can recast the earlier antibiotic example as a SCAR model, by assuming that assignment into the different studies ${\cal S}_1, {\cal S}_2$ is random and that $S$ has no causes \cref{fig:antibiotics-scar}. One can view the SCAR model as the generalization ``closest in spirit'' to the original gID formulation  that admits a coherent observed data likelihood that includes both observational and interventional contexts.

SCAR models, like MCAR models in missing data, are
generally unrealistic, as we expect selection into different domains to be systematic.
In our example, if the selection mechanism into either the observational group or the experimental study is influenced by observed characteristics $W$, such as the patient's age, as well as the treatment assignment $A$, the result is a SAR model shown in (\cref{fig:antibiotics-sar}).  If the patients are also selected based on unobserved characteristics that also influence patient outcomes, such as a doctor's intuition about a particular case $U_1$, the result is a SNAR model shown in (\cref{fig:antibiotics-snar}).


Since $S$ is a part of the model, representing situations where only some interventions 
are available to the analyst entails imposing restrictions on support of $S$.
Thus, we allow only a subset of $\mathfrak{X}_S$, termed $\mathfrak{X}^+_S$, to have support.
For example, if $S$ has children $A_1$ and $A_2$, we may allow ${\mathfrak X}_{\{ S^e_{A_1}, S^e_{A_2} \}}$
to have support on the set $\{ \{ 0, 0 \}, \{ 0, 1 \}, \{ 1, 0 \}\}$.  In other words, $S$ allows no variables to be intervened on, or either only $A_1$ or $A_2$ to be intervened on, but not both $A_1$ and $A_2$.
Prior work
represented this by explicitly providing a set of distributions as inputs to the algorithm \citep{leeGeneralIdentifiabilityArbitrary2019}.

Queries corresponding to interventional distributions $p(\vec{Y}(\vec{a}))$ in an SCM must be modified in an CS-SCM to take the special nature of $S$ into account.  In particular, the analogue of the query $p(\vec{Y}(\vec{a}))$ in the original SCM corresponds to
$p(\vec{Y}(\vec{a},S=\emptyset))$, which reads ``the distribution of outcomes $\vec{Y}$ had the context of the CS-SCM were set to be observational, except for variables $\vec{A}$, which were set to $\vec{a}$''.
Note that this query potentially entails a positivity violation in the sense that no positive support may exist in the observed data distribution for the situation where $\vec{A}=\vec{a}$ and $S=\emptyset$.  This occurs, in particular if elements of $\vec{A}$ are among children of $S$.  While this may potentially prevent identification, restrictions on the CS-SCM may allow identification to be obtained in some cases.  A close analog of this phenomenon arises in the interventionist formulations of mediation analysis \citep{robins10alternative,rrs20volume_mediation_jmlr}.

If all variables in an CS-SCM are observed, we have the following result for the query $p(\vec{Y}(\vec{a},S=\emptyset))$ that directly generalizes the g-formula.
\begin{restatable}[Context Selected g-formula]{thm}{icsscm}\label{thm:ics-scm}
Fix a fully observed CS-SCM corresponding to an LS-DAG $\bar{\G}^{[]}$ with a vertex set $\vec{V}$, and disjoint subsets $\vec{A},\vec{Y}$ of $\vec{V}$.
Let $\vec{Y}^* = \an_{\bar{G}^{[]}
(\vec{a},\emptyset)
}(\vec{Y})$.
Then $p(\vec{Y}(\vec{a},S=\emptyset))$ is identified if and only if for every element $V \in \vec{Y}^*$ there exists a value in ${\mathfrak X}_S^+$ laidback for $V$.  If so, we have:
{\small
\begin{align}
p(\vec{Y}(\vec{a},S=\emptyset))
=
\sum_{\vec{Y}^* \setminus \vec{Y}} \prod_{V \in \vec{Y}^*} p(V | \pa_{\bar{\G}}(V)) \vert_{\vec{a}_{\vec{A} \cap \pa_{\bar{\G}}(V)}, S^e_V = 0}.
\label{eqn:ics-g}
\end{align}
}
\end{restatable}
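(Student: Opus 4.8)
The plan is to prove the biconditional in two parts: that formula~(\ref{eqn:ics-g}) is correct whenever every $V\in\vec{Y}^*$ has a laidback value in $\mathfrak{X}_S^+$, and that the query fails to be identified when some $V\in\vec{Y}^*$ has no such value.

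\textbf{Soundness.} I would start from the observation that a CS-SCM is itself an SCM with independent errors whose DAG is $\bar{\G}$ (with $S\in\vec{V}$ carrying its own structural equation and error), and that $p(\vec{Y}(\vec{a},S=\emptyset))$ is the interventional distribution $p(\vec{Y}\mid\doo(\vec{a},S=\emptyset))$ for the joint intervention on $\vec{A}\cup\{S\}$. Applying the ordinary g-formula to $\bar{\G}$ gives $\prod_{V\in\vec{V}\setminus(\vec{A}\cup\{S\})}p(V\mid\pa_{\bar{\G}}(V))\vert_{\vec{a},S=\emptyset}$. The next step simplifies each factor: by \cref{dfn:ics-scm}, $\tilde{f}_V$ at $S^e_V=0$ collapses to $f_V(\pa_{\G}(V),\epsilon_V)$, and since the errors are mutually independent, $p(V\mid\pa_{\bar{\G}}(V))$ evaluated at $S^e_V=0$ does not depend on the remaining coordinates of $S$ (and for $V\notin\ch_{\bar{\G}}(S)$ there is no $S$-dependence at all), so each factor equals $p(V\mid\pa_{\bar{\G}}(V))\vert_{\vec{a}_{\vec{A}\cap\pa_{\bar{\G}}(V)},S^e_V=0}$. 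Because $s=\emptyset$ deletes no edges from the LS-DAG, $\bar{\G}^{[\emptyset]}(\vec{a})$ is the ordinary SWIG of $\bar{\G}$, so $\vec{Y}^*=\an_{\bar{\G}^{[\emptyset]}(\vec{a})}(\vec{Y})$ is an ancestral set in the random-vertex DAG of that SWIG; marginalizing the truncated factorization down to this ancestral set, exactly as in the derivation of~(\ref{eqn:do-fact}), yields~(\ref{eqn:ics-g}). Finally one checks that each factor is a functional of $p(\vec{V})$ restricted to support $\mathfrak{X}_S^+$: for $V\notin\ch_{\bar{\G}}(S)$ this is immediate, and for $V\in\ch_{\bar{\G}}(S)\cap\vec{Y}^*$ it needs $\Pr(S^e_V=0)>0$ under $\mathfrak{X}_S^+$, which is precisely the existence of a value in $\mathfrak{X}_S^+$ laidback for $V$; given this, $p(V\mid\pa_{\bar{\G}}(V))\vert_{S^e_V=0}$ is recovered by conditioning $p(\vec{V})$ on $\{S^e_V=0\}$ (using positivity of the underlying SCM for the relevant parent configurations).

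\textbf{Necessity.} Here I would argue the contrapositive by an adversarial construction. Suppose some $V^\dagger\in\vec{Y}^*$ admits no laidback value in $\mathfrak{X}_S^+$, i.e.\ every $s\in\mathfrak{X}_S^+$ has $s^e_{V^\dagger}=1$. Since $V^\dagger\in\vec{Y}^*=\an_{\bar{\G}^{[\emptyset]}(\vec{a})}(\vec{Y})$, fix a directed path $V^\dagger=V_0\to V_1\to\dots\to V_k=Y$ to some $Y\in\vec{Y}$ in that SWIG (possibly $k=0$); because fixed SWIG nodes have no incoming edges, this path runs entirely through random vertices and in particular avoids $\vec{A}$ and $S$, and the edge $V_{j-1}\to V_j$ lies in the original DAG $\G$. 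Build two CS-SCMs $\mathcal{M}_1,\mathcal{M}_2$ over $\bar{\G}$ with support $\mathfrak{X}_S^+$, all variables binary, sharing the same mechanism for $S$, the same equations $f_{V_j}\equiv V_{j-1}$ along the path (ignoring other parents and noise), and arbitrary fixed equations elsewhere, differing only in that $f_{V^\dagger}\equiv 0$ in $\mathcal{M}_1$ and $f_{V^\dagger}\equiv 1$ in $\mathcal{M}_2$. On $\mathfrak{X}_S^+$ we have $S^e_{V^\dagger}=1$ always, so $V^\dagger=S^v_{V^\dagger}$ and $f_{V^\dagger}$ is never evaluated; hence $\mathcal{M}_1$ and $\mathcal{M}_2$ induce the same $p(\vec{V})$. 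Under $\doo(\vec{a},S=\emptyset)$, however, $S^e_W=0$ for all $W$, so $V^\dagger$ is generated by $f_{V^\dagger}$ and the copying equations transmit its value along the path (no path vertex is intervened on), giving $Y(\vec{a},S=\emptyset)\equiv 0$ in $\mathcal{M}_1$ and $\equiv 1$ in $\mathcal{M}_2$; thus $p(\vec{Y}(\vec{a},S=\emptyset))$ differs between two models agreeing on $p(\vec{V})$, so the query is not identified.

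\textbf{Main obstacle.} The g-formula manipulation is routine; the delicate points are (i) in the soundness direction, tying the laidback condition precisely to recoverability of $p(V\mid\pa_{\bar{\G}}(V))\vert_{S^e_V=0}$ from $p(\vec{V})$ on $\mathfrak{X}_S^+$ and verifying this conditional is insensitive to the other coordinates of $S$, and (ii) in the necessity direction, confirming that the two constructions genuinely lie in the CS-SCM class with the prescribed support, agree on $p(\vec{V})$, and that the perturbation at $V^\dagger$ survives both interventions and the context-specific edge structure (which is the benign case here, since $s=\emptyset$ deletes no edges). The interaction between the restricted support $\mathfrak{X}_S^+$ and positivity is where I expect to spend the most care.
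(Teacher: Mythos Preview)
Your proposal is correct and follows essentially the same approach as the paper: soundness by applying the ordinary g-formula to the CS-SCM (viewed as an SCM with independent errors) and simplifying via the context-specific structure of \cref{dfn:ics-scm}, and necessity by taking a directed path from the offending vertex to some $Y\in\vec{Y}$ and exhibiting two models that agree on the observed law but disagree on the interventional one. Your write-up is considerably more detailed than the paper's---in particular, you make explicit why each factor at $S^e_V=0$ is insensitive to the remaining coordinates of $S$ and why the path in the SWIG avoids $\vec{A}\cup\{S\}$, and your binary copying construction is a concrete instance of the paper's more abstract ``one-to-one mapping from $p(Z)$ to $p(Y)$'' argument---but the underlying strategy is the same.
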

Note that the query may not be identified even under full observability, if available contexts for $S$ are not laidback for elements in
$\vec{Y}^*$.  Nevertheless, the above result allows identifiability in situations corresponding to SCAR or SAR.
While incorporating the selection process as an explicit part of the causal model allows us to explicitly represent complex types of systematic selection, it also (unsurprisingly) creates difficulties with identification of causal effects in models with hidden variables that yield systematic selection more complicated than SCAR or SAR.

\subsection{
Latent Projections in CS-SCMs}\label{sec:latent_projections}

In order to formulate a 
general identification algorithm
for CS-SCMs with hidden variables, we first generalize latent projections and the fixing operator to CS-SCMs.

A \emph{labelled selection acyclic directed mixed multigraph (LS-ADMMG)} is a multigraph with directed and bidirected edges, no directed cycles, and the property that any pair of edges of the same type connecting the same vertex pair $A,B$ must have different labels.
Given an LS-DAG $\bar{\G}^{[]}(\vec{V} \cup \vec{H})$, where $S \in \vec{V}$,
define a latent projection $\bar{\G}^{[]}(\vec{V})$ to be an LS-ADMMG where
any pair $A,B \in \vec{V}$ with a directed path from $A$ to $B$ in $\bar{\G}^{[]}(\vec{V} \cup \vec{H})$ where all intermediate elements are in $\vec{H}$ contains a directed edge labeled by a union of labels for every edge on the path.
Similarly, any pair $A,B \in \vec{V}$ with a marginally d-connecting path from $A$ to $B$ in $\bar{\G}^{[]}(\vec{V} \cup \vec{H})$, where the first edge is into $A$ and the last into $B$, where all intermediate elements are in $\vec{H}$ contains a bidirected edge labelled by a union of labels for every edge on the path.
Note that the result is a multigraph since the same pair may be connected by the same edge type with multiple labels.
A similar definition yields a latent projection $\bar{G}^{[]}(\vec{V}(\vec{a}))$ of a labelled hidden variable SWIG $\bar{G}^{[]}(\vec{V}(\vec{a}) \cup \vec{H}(\vec{a}))$.  An example illustrating why labelled multigraphs are necessary to represent latent projections of LS-DAGs in general is found in the Appendix.

Similarly, we define a labelled selection conditional ADMMG (LS-CADMMG) as an LS-ADMMG with random and fixed vertices, such that fixed vertices cannot have edges with arrowheads into them.




Given an LS-CADMMG $\bar{\G}^{[]}(\vec{V},\vec{W})$ where $S \in \vec{W}$, the \emph{context 
selected graph} $\bar{\G}^{[s]}(\vec{V},\vec{W})$ is defined by removing every edge such that $s$ is serious for any vertex in that edge's label, 
removing labels for all other edges, and removing every unlabelled duplicate edge of the same type connecting everuy pair of vertices.
Note that 
this construction always yields a CADMG.  Note also that if $s=\emptyset$, all parents and siblings of $\G$ are preserved in $\G^{[s]} = \G^{[\emptyset]}$, but the resulting object is no longer a multigraph.

The fixing operator and genealogic relations generalize in a straightforward way to multigraphs we consider.  In particular, multiple labelled edges of the same type connecting vertices $A$ and $B$ are treated as a single edge of that type, with labels ignored.
If a graph index for a genealogic set is omitted, it is understood to be $\bar{\G}^{[\emptyset]}$.

\subsection{Towards Identification Under SNAR}

A seemingly reasonable approach for obtaining identification of interventional distributions $p(\vec{Y}(\vec{a}))$ given a hidden variable CS-SCM represented by an LS-ADMMG ${\cal G}^{[]}(\vec{V})$
is to first address systematic selection by identifying $q_{\vec{V}}(\vec{V}) \equiv p(\vec{V} | \text{do}(S=\emptyset))$, corresponding to the SWIG ${\cal G}^{[\emptyset]}(\emptyset)$, and then invoke the ID algorithm on this CADMG, the distribution $\bar{p}(\vec{V})$ and the query $p(\vec{Y}(\vec{a}))$.
This strategy clearly yields a sound algorithm.  In fact, we can show that despite the extra context-specific independencies implied by an CS-SCM,
we have the following result.

\begin{restatable}
{thm}{sidsoundness}
Given a hidden variable CS-SCM represented by a LS-ADMMG $\!\bar{\G}^{[]}(\vec{V})$,
the ID algorithm applied to 
$p(\vec{V} | \text{do}(S\!=\!\emptyset))$ and an ADMG $\bar{\G}^{[\emptyset]}$
is sound and complete.
\label{thm:id-s-0}
\end{restatable}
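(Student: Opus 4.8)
\emph{Plan.} The statement asserts soundness --- whenever ID run on $(\bar\G^{[\emptyset]}, q_{\vec V})$, with $q_{\vec V}\equiv p(\vec V\mid\doo(S=\emptyset))$, returns a functional, that functional equals $p(\vec Y(\vec a, S=\emptyset))$ in every CS-SCM with LS-ADMMG $\bar\G^{[]}(\vec V)$ --- and completeness --- whenever it returns ``not identified'', $p(\vec Y(\vec a, S=\emptyset))$ is not a function of $q_{\vec V}$ in that class. The organizing observation for both is that the context-specific independences of a CS-SCM only concern contexts in which $S$ is \emph{serious} for some vertex; in the context $S=\emptyset$ the value is laidback for all vertices, so every structural equation $\tilde f_V$ of a child of $S$ collapses to its original $f_V$ (\cref{dfn:ics-scm}), no labelled edges are removed, and $\bar\G^{[\emptyset]}$ is an ordinary (C)ADMG rather than a multigraph. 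Consequently the post-$\doo(S=\emptyset)$ system is a plain hidden-variable SCM whose latent projection is $\bar\G^{[\emptyset]}$, $q_{\vec V}$ is nested Markov with respect to $\bar\G^{[\emptyset]}$ (by the usual latent-projection argument together with d-separation soundness of context-specific SWIGs), and $p(\vec Y(\vec a, S=\emptyset))$ is precisely the ordinary interventional distribution $p(\vec Y\mid\doo(\vec a))$ of this SCM. Soundness then follows directly from soundness of ID \citep{shpitserIdentificationJointInterventional2006,huang06do}.

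For completeness I would first show that the CSI constraints do not shrink the relevant model: the set of laws realizable as $p(\vec V\mid\doo(S=\emptyset))$ over CS-SCMs with LS-ADMMG $\bar\G^{[]}$ is exactly the nested Markov model of $\bar\G^{[\emptyset]}$. The non-trivial inclusion is realizability: given $q_{\vec V}$ nested Markov w.r.t.\ $\bar\G^{[\emptyset]}$, realize it by a hidden-variable SCM on a DAG latent-projecting to $\bar\G^{[\emptyset]}$, then adjoin an independent selector $S$ supported on a subset of $\mathfrak X^+_S$ containing $\emptyset$, with exactly the children dictated by the labels of $\bar\G^{[]}$, and replace each such child's equation by the $\tilde f_V$ of \cref{dfn:ics-scm}. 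Its $\doo(S=\emptyset)$-slice recovers $q_{\vec V}$, and its LS-DAG latent-projects into $\bar\G^{[]}$; verifying the latter amounts to chasing the edge-label bookkeeping of \cref{dfn:labelled_selection_dag} and of the LS-latent-projection, which is where the labelled-multigraph formalism is actually needed.

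Completeness then follows by transporting the classical ID counterexample. If ID on $(\bar\G^{[\emptyset]}, q_{\vec V}, p(\vec Y\mid\doo(\vec a)))$ returns ``not identified'', then by completeness of ID \citep{shpitserIdentificationJointInterventional2006,huang06do} there is a hedge for $p(\vec Y\mid\doo(\vec a))$ in $\bar\G^{[\emptyset]}$ and two hidden-variable SCMs $\mathfrak M_1,\mathfrak M_2$ with latent projection $\bar\G^{[\emptyset]}$ that agree on $p(\vec V)=q_{\vec V}$ but disagree on $p(\vec Y\mid\doo(\vec a))$. Lifting each $\mathfrak M_i$ to a CS-SCM $\bar{\mathfrak M}_i$ as above gives $p_{\bar{\mathfrak M}_i}(\vec V\mid\doo(S=\emptyset)) = q_{\vec V}$ for both $i$, while $p_{\bar{\mathfrak M}_i}(\vec Y(\vec a, S=\emptyset)) = p_{\mathfrak M_i}(\vec Y\mid\doo(\vec a))$ differ, so $p(\vec Y(\vec a, S=\emptyset))$ is not determined by $q_{\vec V}$ within this class.

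\emph{Main obstacle.} Two points need care. The lifting step is the first: one must confirm that adjoining $S$ and rewriting the equations of its children introduces no directed or bidirected (multi)edges, and no labels, beyond those of $\bar\G^{[]}$, so that $\bar{\mathfrak M}_i$ genuinely belongs to the stated model class --- this is the only place the LS-ADMMG machinery is truly exercised. The second is the positivity caveat flagged after \cref{thm:ics-scm}: when some $A\in\vec A$ is a child of $S$, the event $\{\vec A=\vec a, S=\emptyset\}$ may have probability zero, so the argument must be run throughout with the potential-outcome quantity $p(\vec Y(\vec a, S=\emptyset))$ --- well defined in any CS-SCM regardless of support --- rather than with a conditional distribution, and one must check that the ID functional and the hedge witnesses are consistent with this reading. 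Beyond these, the proof is an application of the known soundness and completeness of ID.
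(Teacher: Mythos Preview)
There is a genuine gap stemming from a misreading of the statement. You treat $q_{\vec V}\equiv p(\vec V\mid\doo(S=\emptyset))$ as the \emph{input} distribution to ID and the implicit query as some $p(\vec Y(\vec a))$. The paper's reading (admittedly the wording is ambiguous) is the opposite: $p(\vec V\mid\doo(S=\emptyset))$ is the \emph{query}, the input distribution is the actual observed law $p(\vec V)$ (which includes $S$), and $\bar\G^{[\emptyset]}$ is the ordinary ADMG on all of $\vec V$ obtained by dropping labels from $\bar\G^{[]}$---so $S$ remains a random vertex. This reading is forced both by the paper's own proof and by the sentence immediately after the theorem, which invokes it to conclude that $p(Y,A\mid\doo(S=\emptyset))$ is not identified in Fig.~\ref{fig:id-double-bow}; under your reading the theorem says nothing about identifiability of $q_{\vec V}$ from $p(\vec V)$.

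Under the intended reading your completeness argument does not go through. The only intervened variable is $S$, so when ID fails there is a district $\vec D^*\in{\cal D}(\bar\G^{[\emptyset]}_{\vec V\setminus\{S\}})$ with $\vec D^*\subsetneq\cl_{\bar\G^{[\emptyset]}}(\vec D^*)$, and the paper observes that $S$ must lie in that closure (otherwise $\vec D^*$ would already be a district of the full graph and hence reachable). The classical hedge construction of \cite{shpitserIdentificationJointInterventional2006} would then assign $S$ and its children bit-parity structural equations, which violate the form mandated by \cref{dfn:ics-scm}: a child $V$ of $S$ must satisfy $V\gets f_V(\pa(V)\setminus\{S\},\epsilon_V)$ when $S^e_V=0$ and $V\gets S^v_V$ when $S^e_V=1$, so $S$ cannot enter $V$'s equation as an ordinary parity bit. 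Your ``adjoin an independent $S$'' lifting is therefore inapplicable---$S$ is already entangled in the hedge and cannot be replaced by an independent copy without destroying the witness. The paper's proof supplies exactly the missing piece via \cref{thm:s_only_hedge}, a bespoke hedge construction that rewrites the equations of $S$ and of $\ch(S)$ to comply with \cref{dfn:ics-scm} while still producing two CS-SCMs that agree on $p(\vec V)$ and disagree on $p(\vec V\mid\doo(S=\emptyset))$.
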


Despite this, the above sequential strategy does not yield a complete algorithm for systematic selection.
To see why, consider the following simple example, illustrated by the hidden variable LS-DAG $\bar{\cal G}^{[]}$ shown in Fig.~\ref{fig:id-double-bow}, where we are interested in $p(Y(a,s=\emptyset))$.  
Completeness of the ID algorithm implies that this interventional distribution is not identified under standard SCM semantics corresponding to this graph.  Theorem \ref{thm:id-s-0} above also implies the distribution $p(Y, A \mid \text{do}(S=\emptyset))$ is not identified.

However, identification may be obtained in an CS-SCM due to the following simple derivation:
{\small
\begin{align*}
p(Y(a,S=\emptyset)) &= p(Y(a)) = p(Y(a) | S = (s^e_A=1,s^v_A = a))\\
&= p(Y(a) \mid A=a, S = (s^e_A=1,s^v_A = a))\\
&= p(Y \mid A=a, S = (s^e_A=1,s^v_A = a)).
\end{align*}
}Here the first equality follows by the exclusion restrictions in this model (or by rule 3 of the potential outcomes calculus \citep{malinsky19po}). The second equality follows since $Y(a) \ci S$ in this model,  which may be verified by the context selected SWIG shown in \ref{fig:id-double-bow-swig}.
The fourth equality follows by definition of the CS-SCM, and the final equality by consistency.

This derivation may be explained by noting that $S$ acts as a \emph{perfect instrument} for the effect of $A$ on $Y$.  Specifically $S$ only influences $Y$ through $A$, and $S$ is independent of any confounders for $A$ and $Y$.  In addition, unlike standard instruments, $S$ completely determines the value of $A$, thereby eliminating any influence of the confounder $U_2$ on $A$.
In light of examples like above, we formulate a general identification algorithm that is able to handle both systematic selection and unobserved confounding together.

\section{An Identification Algorithm for Systematic Selection}
%

Here, we present a general identification algorithm, shown as Algorithms~\ref{alg:main-general} and \ref{alg:cs-general}, for the query $p(\vec{Y}(\vec{a},S=\emptyset))$ in a hidden variable CS-SCM represented by a latent projection multigraph $\bar{\G}^{[]}(\vec{V})$, where $S \in \vec{V}$.  
The algorithm proceeds from the usual factorization used by both the ID and gID algorithms:
{\small
\begin{align*}
    p(\vec{Y}(\vec{a},S=\emptyset)) =
    \sum_{\vec{Y}^*\setminus\vec{Y}}
    \prod_{\vec{D}^* \in {\cal D}(\bar{\G}^{[]}(\vec{a},\emptyset))}
    \!\!\!\!
    p(\vec{D}^* | \text{do}(\pas(\vec{D}^*)))
\end{align*}
}Consider the following example, which illustrates how each term in this factorization is identified by one of three cases: either directly by the ID algorithm, or the gID algorithm, or a new case, formalized via Algorithm~\ref{alg:cs-general}, which obtains identification via the most general version possible of the perfect instrument trick described in the previous section.  Failure cases returning hedges use the standard definition of hedge in \citep{shpitserIdentificationJointInterventional2006}.
\begin{algorithm}[h]
    
\caption{SS-ID (systematic selection ID)}\label{alg:main-general}
\KwData{$\bar{\G}^{[]}, \vec{a}, \vec{Y}, p(\vec{V})$}
\KwResult{$p(\vec{Y}(\vec{a}, S=\emptyset))$ or FAIL}
$\vec{Y}^* \gets \an_{\bar{\G}^{[]}
(\vec{a},\emptyset)
} (\vec{Y})$ \;
\For{$\vec{D}^* \in {\cal D}(\bar{\G}^{[]}_{\vec{Y}^*})$}{
    \If{no $s$ exists that is laidback for $\vec{D}^*$}{
    \Return FAIL(positivity) \label{alg:general_fail_positivity}
    }
    \eIf{$\cl(\vec{D}^*) = \vec{D}^*$}{
            $q(\vec{D}^* | \pas(\vec{D}^*)) \gets \phi_{\vec{V} \setminus \vec{D}^*} (p, \bar{\G}^{[]}
            )\vert_{S=s}$,\\ 
                        $s$ laidback for $\vec{D}^*$, consistent with $a_{\pas(\vec{D}^*)}$ 
            \label{alg:regular_fixing_kernel}
        
    }{
                $\tilde{\G} \gets \phi_{\vec{V} \setminus \cl(\vec{D}^*)}(\bar{\G}^{[]})$;
                $\tilde{q} \gets \phi_{\vec{V} \setminus \cl(\vec{D}^*)}(p; \bar{\G}^{[]})$\;
        \eIf{$S \not \in \cl(\vec{D}^*)$}{
                \eIf{
                there is $s$ laidback for $\vec{D}^*$, consistent with $\vec{a}_{\pas(\vec{D}^*)}$, and $\vec{D}^*$ reachable in $\tilde{\G}^{[s]}$  
                }{
                \label{alg:s_fixing_kernel}
                $q(\vec{D}^* | \pas(\vec{D}^*)) \gets 
                \phi_{\cl(\vec{D}^*) \setminus \vec{D}^*}(\tilde{q}; \tilde{\G}^{[s]})$
                }{         
                \Return FAIL(thicket) \label{alg:general_fail_thicket}
                }
        }{

            $q(\vec{D}^* | \pas(\vec{D}^*))\gets 
            \cref{alg:cs-general}(\tilde{G}$,$\vec{a}$,$\tilde{q}$,$\vec{D}^*$,$\cl(\vec{D}^*))$ \label{alg:cs_general_call}
        
        }
    }
}
\Return $\sum_{\vec{Y}^* \setminus \vec{Y}} \prod_{\vec{D}^*} q(\vec{D}^* | \pas(\vec{D}^*)) \vert_{\vec{a}_{\vec{A} \cap \pas_\G (\vec{D}^*)},s_{\vec{D}^*}}$,\\
with $s_{\vec{D}^*}$ 
laidback for
$\vec{D}^*$, consistent with $\vec{a}_{\vec{A} \cap \pas_\G (\vec{D}^*)}$.
\end{algorithm}

\begin{algorithm}[h]
\caption{Identification for a confounded selector} \label{alg:cs-general}
\KwData{$\bar{\G}^{[]}(\vec{C},\pas(\vec{C})), a, q_{\vec{C}}(\vec{C} | \pas(\vec{C})), \vec{D}^*, \vec{C}$;\\
where $\vec{C} \equiv \cl(\vec{D}^*)$;}
\KwResult{$q_{\vec{D}^*}(\vec{D}^* \mid \pas(\vec{D}^*))$ or FAIL}

$\ch^*(S) \gets \ch(S) \cap ( \cl(\vec{D}^*) \setminus \vec{D}^*) $\;

\eIf{$\ch^*(S) = \emptyset$}
    { 
    
    \Return FAIL(hedge$\langle \vec{D}^*, \cl(\vec{D}^*) \rangle$)\;\label{alg:cs_fail_1}

    }{
   \For{$\bar{s} \in \mathfrak{X}^+_S$ 
   which are laidback for $\vec{D}^*$ but serious for $\vec{Z} \subseteq \ch^*(S)$ at 
   $\vec{z}$
   consistent for $a_{\pas(\vec{D}^*)}$}{
        Let $\vec{D}' \in {\cal D}({\bar{\G}^{[]}(\bar{s})})$, s.t. $\vec{D}^* \subseteq \vec{D}'$ \label{alg:encapsulating_district}\;
        \If{
            $\{ D(\bar{s}) : D \in \vec{D}' \cap \de(S) \} \ci S \mid
            \{ D(\bar{s}) : D \in \vec{D}' \cap \nd(S) \} \cup \pas(\vec{D}')$ in $\bar{\G}^{[]}(\bar{s})$ and
            $D^*$ is reachable in $\bar{\G}^{[s]}(\bar{s})_{\vec{D}'}$
            \label{alg:perfect_iv_trick}
            }{
             $q^{s,\vec{z}}_{\vec{D}'}(\vec{D}' | \pas(\vec{D}')) \gets
                 \left[ \prod\limits_{D \in \de(S) \cap \vec{D}'} q_{\cl(D^*)}(D | \bar{s},\pre_{\prec}(D)
                 ) \right] \vert_{\vec{Z} = \vec{z}}\linebreak \times
                 \left[ \prod\limits_{D \in \nd(S) \cap \vec{D}'} q_{\cl(D^*)}(D | \pre_{\prec}(D)
                 ) \right] \vert_{\vec{Z} = \vec{z}}
                 $, where $\pre_{\prec}(D)$ are topological predecessors of $D$ in $D' \cup \pas(D')$.
             \Return $\phi_{\vec{D}' \setminus D}(q^{s,\vec{z}}_{\vec{D}'};\bar{\G}^{[s]}(\bar{s})_{\vec{D}'})$\; 
            }
        
        }
    \Return FAIL\label{alg:cs_fail_2}\;
    }
\end{algorithm}

\begin{figure}[h]
\centering
        \begin{tikzpicture}[rotate=90]
		\tikzstyle{block} = [draw, circle, inner sep=1.5pt, fill=lightgray]
		\tikzstyle{block2} = [draw, rectangle, inner sep=1.5pt, fill=lightgray]
		\tikzstyle{selector} = [draw, diamond, inner sep=.5pt]
		\tikzstyle{input} = [coordinate]
		\tikzstyle{output} = [coordinate]
            \tikzset{edge/.style = {->,> = latex'}}

            \node[inner sep=1pt] (w2) at (1, 0) {$W_2$};
            \node[inner sep=1pt] (a3) at (0, 0) {$A_3$};
            \node[inner sep=1pt] (c) at (1, -1) {$C$};
            \node[selector] (s) at (0, -1) {$S$};
            \node[inner sep=1pt] (a1) at (0, -2) {$A_1$};
            \node[inner sep=1pt] (a2) at (1, -2) {$A_2$};
            \node[inner sep=1pt] (w1) at (1, -3) {$W_1$};
            \node[inner sep=1pt] (m) at (0, -3) {$M$};
            \node[inner sep=1pt] (y) at (0, -4) {$Y$};
            
            \draw[-stealth] (m) to  (y);
            \draw[-stealth] (w1) to  (y);
            \draw[-stealth] (a1) to  (m);
            \draw[-stealth][bend left=15] (a2) to  (w1);
            \draw[-stealth] (s) to  (a1);
            \draw[-stealth][bend right=35] (s) to  (a2);
            \draw[-stealth] (c) to  (s);
            \draw[-stealth][bend left=55] (c) to (y);
            \draw[-stealth] (a3) to  (s);
            \draw[-stealth][bend right=0] (w2) to  (s);
            \draw[-stealth][bend left=30] (w2) to  (w1);
            \draw[stealth-stealth][bend right=20] (a2) to node[above]{{ \tiny $\{A_2\}$}}(y);
            \draw[stealth-stealth][bend right=15] (a1) to node[below]{\tiny $\{A_1\}$}(m);
            \draw[stealth-stealth][bend left=60] (w2) to (y);
            \draw[stealth-stealth][bend left=0] (w2) to (a3);
            \draw[stealth-stealth][bend left=0] (s) to node[above, xshift=-0.1cm]{\tiny $\{A_2\}$}(a2);
            \draw[stealth-stealth][bend right=0] (a2) to node[below]{\tiny $\{A_2\}$}(w1);
        \end{tikzpicture}
\caption{An LS-ADMMG illustrating 
Algorithm~\ref{alg:main-general}.}
\label{fig:general-case-example}
\end{figure}
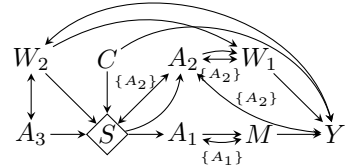

\begin{ex}[Identifying $p(Y(\vec{a}, S = \emptyset))$ in \cref{fig:general-case-example}]\label{ex:general-case-example} 
The identifying functional is
$p(Y (\vec{a}, S=\emptyset)) = \sum_{\vec{Y}^* \setminus Y} \!\!\prod_{D^*_i \in {\cal D}(\G_{\vec{Y}^*})}  q_{\vec{D}^*_i}(\vec{D}^*_i | \pas(\vec{D}^*_i))$, which is equal to
{\small
\begin{align*}
\sum_{M, W_1, W_2, C}
& p(C) p(M | a_1,
    s_{a_1})
 p(W_1 | W_2, a_2, 
    s_{a_2})\\
&  \sum_{A_3} p(Y | M,W_2,W_1,C,s_{a_1,a_2},A_3) p(W_2,A_3),
\end{align*}
}where $s_{\vec{a}}$ is a shorthand for any value of $S$ which is serious for $\vec{A}$ at values $\vec{a}$.
See the Appendix for a detailed derivation.
\end{ex}

Our results show that our proposed algorithm is sound, and implies a non-identified query in all but one failure cases.  We illustrate a number of failure cases of the algorithm in the Appendix.  We conjecture this algorithm is also complete.
\vspace{-0.4cm}
\begin{restatable}[Soundness]{thm}{generalsoundness}
    \cref{alg:main-general} is sound.
\end{restatable}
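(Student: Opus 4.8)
The plan is to prove soundness by following the recursive structure of \cref{alg:main-general} and \cref{alg:cs-general}: I will argue that whenever the algorithm does not return \texttt{FAIL}, each district kernel it constructs equals the corresponding factor of a valid district factorization of $p(\vec{Y}(\vec{a},S=\emptyset))$, so that the returned product is the queried functional. The backbone is the identity
\[
p(\vec{Y}(\vec{a},S=\emptyset)) = \sum_{\vec{Y}^* \setminus \vec{Y}} \prod_{\vec{D}^* \in {\cal D}(\bar{\G}^{[]}_{\vec{Y}^*})} p(\vec{D}^* \mid \doo(\pas(\vec{D}^*)),\, S=\emptyset),
\]
which I would justify exactly as \eqref{eqn:do-fact}: the context-selected SWIG theorem gives that the counterfactual law $p(\vec{V}(\vec{a},S=\emptyset))$ is Markov relative to the observational-context LS-SWIG $\bar{\G}^{[\emptyset]}(\vec{a},\emptyset)$, and the ancestral-subgraph and district argument behind \eqref{eqn:do-fact} uses nothing about the law beyond this Markov property. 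The second ingredient is a \emph{reduction lemma}: for any $\vec{D}^*$ and any value $s$ of $S$ that is laidback for $\vec{D}^*$ and consistent with $\vec{a}_{\pas(\vec{D}^*)}$,
\[
p(\vec{D}^* \mid \doo(\pas(\vec{D}^*)),\, S=\emptyset) = p(\vec{D}^* \mid \doo(\pas(\vec{D}^*)\setminus\{S\}),\, S=s),
\]
which is immediate from \cref{dfn:ics-scm}, since the structural equations of the variables in $\vec{D}^*$ see $S$ only through the coordinates $S^e_V,S^v_V$ with $V\in\vec{D}^*$, and a laidback $s$ freezes those at their observational setting. This lemma licenses every $\vert_{S=s}$ substitution appearing in both algorithms, and makes the \texttt{FAIL(positivity)} guard on line~\ref{alg:general_fail_positivity} precisely the precondition needed for those substitutions to reference events of positive probability.

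\textbf{The two fixing branches.} For a term with $\cl(\vec{D}^*)=\vec{D}^*$ (line~\ref{alg:regular_fixing_kernel}), $\vec{D}^*$ is reachable in the underlying latent-projection ADMG obtained by ignoring edge labels, and any valid fixing sequence there lifts to the ordinary latent projection of the hidden-variable DAG; hence soundness of the fixing operator \citep{richardsonNestedMarkovProperties2023} gives $\phi_{\vec{V}\setminus\vec{D}^*}(p;\bar{\G}^{[]}) = p(\vec{D}^*\mid\doo(\pas(\vec{D}^*)))$ as a kernel in $\pas(\vec{D}^*)$, and restricting $S=s$ to a laidback, $\vec{a}$-consistent value together with the reduction lemma recovers the required term (the same soundness underlies \cref{thm:id-s-0}). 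For a term with $S\notin\cl(\vec{D}^*)$ and $\cl(\vec{D}^*)\neq\vec{D}^*$ (line~\ref{alg:s_fixing_kernel}), fixing down to $\cl(\vec{D}^*)$ is sound as above, and I would then apply the context-selected SWIG theorem a second time to conclude that, for a laidback $s$, the kernel $\tilde{q}\vert_{S=s}$ is Markov relative to the CADMG $\tilde{\G}^{[s]}$, whose additional context-specific absent edges may render $\vec{D}^*$ reachable; a further application of fixing soundness inside $\tilde{\G}^{[s]}$ then identifies the term. Soundness requires nothing about the \texttt{FAIL(thicket)} branch.

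\textbf{The confounded-selector branch.} The substantive case is $S\in\cl(\vec{D}^*)$, delegated to \cref{alg:cs-general}. Here I would fix the context $\bar{s}\in\mathfrak{X}^+_S$ selected by the loop --- laidback for $\vec{D}^*$, serious for some $\vec{Z}\subseteq\ch^*(S)$ at $\vec{z}$ --- let $\vec{D}'\supseteq\vec{D}^*$ be the district of $\bar{\G}^{[]}(\bar{s})$ containing $\vec{D}^*$, and prove the key identity that the product kernel $q^{s,\vec{z}}_{\vec{D}'}$ built on line~\ref{alg:perfect_iv_trick} equals the interventional law of $\vec{D}'$ in which $\pas(\vec{D}')$ is held fixed, $\vec{Z}$ is set serious at $\vec{z}$, and $S$ is otherwise observational. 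The argument generalizes the ``perfect instrument'' derivation of \cref{sec:prior-work}: expanding the closure kernel $q_{\cl(\vec{D}^*)}$ along a topological order of $\vec{D}'\cup\pas(\vec{D}')$ and grouping its factors into the $\nd(S)\cap\vec{D}'$ block and the $\de(S)\cap\vec{D}'$ block, the former factors do not involve $S$, while for the latter the d-separation hypothesis $\{D(\bar{s}):D\in\vec{D}'\cap\de(S)\}\ci S\mid\{D(\bar{s}):D\in\vec{D}'\cap\nd(S)\}\cup\pas(\vec{D}')$ in $\bar{\G}^{[]}(\bar{s})$ --- valid in the counterfactual law by the context-selected SWIG theorem --- lets me peel off, factor by factor in topological order, the conditional independence of each such variable from $S$ given its predecessors, and hence replace its observational factor by the factor conditioned on $S=\bar{s}$; by \cref{dfn:ics-scm} and consistency that conditioned factor is exactly the factor under the intervention making $\vec{Z}$ serious at $\vec{z}$, just as $p(Y\mid A=a,S=(s^e_A=1,s^v_A=a))=p(Y(a))$ in the motivating example. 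Multiplying the factors back (the district factorization of a Markov kernel) yields the claimed law of $\vec{D}'$, and a final fixing of $\vec{D}'\setminus\vec{D}^*$ inside $\bar{\G}^{[s]}(\bar{s})_{\vec{D}'}$ --- sound because $\vec{D}^*$ is reachable there by the second hypothesis on line~\ref{alg:perfect_iv_trick} --- delivers $p(\vec{D}^*\mid\doo(\pas(\vec{D}^*)))$. Re-assembling the per-district functionals into the backbone factorization with the $\vert_{\vec{a},\,s_{\vec{D}^*}}$ substitutions, each well defined by the reduction lemma, completes the proof.

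\textbf{Main obstacle.} I expect the hard part to be the key identity for \cref{alg:cs-general}: making the perfect-instrument replacement rigorous \emph{factor by factor} along the topological order --- rather than only as a marginal statement --- while reconciling the three graphs in play ($\bar{\G}^{[]}$, the serious-context multigraph $\bar{\G}^{[]}(\bar{s})$, and the edge-pruned CADMG $\bar{\G}^{[s]}(\bar{s})$), tracking precisely how the district of $\vec{D}^*$ enlarges to $\vec{D}'$ once children of $S$ are made serious, and keeping $\bar{s}\in\mathfrak{X}^+_S$ in force throughout so that every conditioning event --- in particular $S=\bar{s}$ --- has positive probability in the observed data.
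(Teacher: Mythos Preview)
Your proposal is correct and follows essentially the same route as the paper: the same district-factorization backbone, the same three-case split (reachable closure via ordinary fixing; $S\notin\cl(\vec{D}^*)$ via fixing to the closure and then exploiting the $S$-specific edge removals; $S\in\cl(\vec{D}^*)$ via the perfect-instrument argument), and the same handling of \cref{alg:cs-general} by splitting $\vec{D}'$ into $\de(S)$ and $\nd(S)$ blocks, applying the d-separation hypothesis to insert $S=\bar{s}$ into the descendant factors, and then invoking consistency/mechanism invariance. The paper's proof is terser---it cites gID soundness for the second case and rule~3 of the po-calculus for the third---but your explicit reduction lemma and factor-by-factor plan are the same reasoning spelled out.
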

\vspace{-0.2cm}
\begin{restatable}[Non-identification]{thm}{partialcompleteness}\label{alg:partial_completeness}
   If \cref{alg:main-general} fails at \ARef*{alg:general_fail_positivity}, \ARef*{alg:general_fail_thicket}, or \ARef*{alg:cs_fail_1} then the causal effect is not identified.
\end{restatable}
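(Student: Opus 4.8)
My plan is to prove each of the three failure lines yields non-identifiability by exhibiting two hidden-variable CS-SCMs that (i) respect the support restriction $\mathfrak{X}^+_S$, (ii) induce the same observed law $p(\vec{V})$, hence the same $p(\vec{V})\vert_{S=s}$ for every admissible $s$, and (iii) disagree on the target $p(\vec{Y}(\vec{a},S=\emptyset))$. Since the algorithm reaches any of these lines only after forming $\vec{Y}^*$ and while processing a fixed district $\vec{D}^*\in{\cal D}(\bar{\G}^{[]}_{\vec{Y}^*})$, and $\vec{D}^*\subseteq\vec{Y}^*\subseteq\an_{\bar{\G}^{[]}(\vec{a},\emptyset)}(\vec{Y})$, it suffices to make the two models disagree on the single factor $p(\vec{D}^*\mid\doo(\pas(\vec{D}^*)))$ while leaving the remaining factors of $p(\vec{Y}(\vec{a},S=\emptyset))=\sum_{\vec{Y}^*\setminus\vec{Y}}\prod_{\vec{D}^*}p(\vec{D}^*\mid\doo(\pas(\vec{D}^*)))$ unchanged and choosing the free values of the fixed parents so the discrepancy survives the outer sum; this ``district-to-query'' reduction is exactly the one used in the completeness proofs of the ID and gID algorithms.

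\textbf{Positivity failure (line~\ref{alg:general_fail_positivity}).} Here no $s\in\mathfrak{X}^+_S$ is laidback for $\vec{D}^*$, so in every admissible context some child of $S$ inside $\vec{D}^*$ is clamped to a copy of $S^v$ and carries no information about its structural equation. When $\vec{D}^*\cap\ch(S)$ is a singleton $\{V\}$ the construction is immediate: take two CS-SCMs differing only in $f_V$; since $V$ equals $S^v_V$ in every $s\in\mathfrak{X}^+_S$ the two models coincide on $p(\vec{V})\vert_{S=s}$ for all admissible $s$, yet under $S=\emptyset$ the variable reverts to $f_V$, changing $p(\vec{D}^*\mid\doo(\pas(\vec{D}^*)))$ and, as $\vec{D}^*$ is ancestral to $\vec{Y}$, also $p(\vec{Y}(\vec{a},S=\emptyset))$. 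In general no single admissible context exhibits the whole district natural simultaneously; I would then place the perturbation on a latent degree of freedom internal to $\vec{D}^*$ (which exists since $\vec{D}^*$ is a district) coupling two components of $\vec{D}^*$ that are never jointly laidback, so it is masked by clamping in every admissible context but alters the $S=\emptyset$ joint natural law of $\vec{D}^*$.

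\textbf{Thicket failure (line~\ref{alg:general_fail_thicket}).} This line is reached only when $S\notin\cl(\vec{D}^*)$ and no $s\in\mathfrak{X}^+_S$ that is laidback for $\vec{D}^*$ and consistent with $\vec{a}_{\pas(\vec{D}^*)}$ makes $\vec{D}^*$ reachable in $\tilde{\G}^{[s]}$, where $\tilde{\G}=\phi_{\vec{V}\setminus\cl(\vec{D}^*)}(\bar{\G}^{[]})$. The family $\{\tilde{\G}^{[s]}\}$ over such $s$ is precisely the collection of input CADMGs the gID algorithm would receive for the subproblem of identifying $p(\vec{D}^*\mid\doo(\pas(\vec{D}^*)))$ on $\tilde{\G}$, and because $S\notin\cl(\vec{D}^*)$ the confounding attached to $S$ is external to this subproblem and $S$ may be treated as an ordinary domain index. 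Non-reachability of $\vec{D}^*$ in every member of this family is exactly the thicket obstruction, so by completeness of the gID algorithm \citep{leeGeneralIdentifiabilityArbitrary2022,kivvaRevisitingGeneralIdentifiability2022} the factor is not a function of $\{p(\vec{V})\vert_{S=s}\}_{s\in\mathfrak{X}^+_S}$. The remaining work is to turn the gID counterexample into a pair of CS-SCMs: take the two hidden-variable models witnessing the thicket, re-attach $S$ with the causal and latent structure prescribed by $\bar{\G}^{[]}$ using identical mechanisms for $S$ and its latent parents in both, and verify, using the labelled latent-projection and context-selected-SWIG results of \cref{thm:id-s-0} and the preceding subsection, that agreement on $p(\vec{V})$ over $\mathfrak{X}^+_S$ and the target discrepancy are both preserved.

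\textbf{Hedge failure (line~\ref{alg:cs_fail_1}, \cref{alg:cs-general}).} Here $S\in\cl(\vec{D}^*)$ but $\ch(S)\cap(\cl(\vec{D}^*)\setminus\vec{D}^*)=\emptyset$, so no child of $S$ lies in the fixing part of the closure. I claim the context-specific independences give no leverage: making $S$ serious for any of its children can only delete edges at vertices outside $\cl(\vec{D}^*)$ or strictly inside $\vec{D}^*$, neither of which enables the perfect-instrument construction of \cref{alg:cs-general}; hence, restricted to $\cl(\vec{D}^*)$, the model is an ordinary hidden-variable (nested Markov) model with $S$ an ordinary vertex, and $\langle\vec{D}^*,\cl(\vec{D}^*)\rangle$ is a hedge in the sense of \citep{shpitserIdentificationJointInterventional2006}. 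I would then apply the standard hedge construction to obtain two hidden-variable models on $\cl(\vec{D}^*)$ agreeing on $p(\cl(\vec{D}^*))$ but disagreeing on $p(\vec{D}^*\mid\doo(\pas(\vec{D}^*)))$, lift them to full CS-SCMs by re-adding the remaining vertices together with $S$'s context-specific labelled edges using identical mechanisms in both, and check that agreement holds for every $s\in\mathfrak{X}^+_S$, not just $s=\emptyset$. This last point is the main obstacle common to all three cases: a CS-SCM packs many interventional regimes into a single data law, so a perturbation invisible under $S=\emptyset$ might resurface under some serious context. The argument must localize the perturbation (to a district-internal latent in the positivity case, to the gID thicket in the thicket case, to the hedge in the hedge case) and invoke the labelled-projection and context-selected-SWIG machinery to certify that no admissible serious context exposes it, while simultaneously ensuring the $S=\emptyset$ discrepancy is not cancelled by the sum over $\vec{Y}^*\setminus\vec{Y}$.
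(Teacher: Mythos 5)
Your overall strategy (a pair of CS-SCMs per failure case, reduced to the degenerate-gID, thicket, and hedge constructions respectively, then lifted to the full query) is the same as the paper's, and your treatment of the positivity and thicket lines matches the paper's in substance: the paper likewise cites the degenerate constructions of \cite{leeGeneralIdentifiabilityArbitrary2019,kivvaRevisitingGeneralIdentifiability2022} for the positivity failure and reuses the thicket construction with $S$ viewed as an index over the input regimes for line~\ref{alg:general_fail_thicket}. The genuine gap is in your handling of line~\ref{alg:cs_fail_1}. You assert that, restricted to $\cl(\vec{D}^*)$, ``the model is an ordinary hidden-variable model with $S$ an ordinary vertex'' and that the standard hedge construction of \cite{shpitserIdentificationJointInterventional2006} can be applied and then lifted. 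But when $S$ has children inside $\vec{D}^*$ (one of the two subcases the paper distinguishes), the standard hedge construction is not a valid CS-SCM witness: \cref{dfn:ics-scm} forces every child $V$ of $S$ to output $S^v_V$ whenever $S^e_V=1$, so the bit-parity mechanisms of the classical construction must be modified, and one must then re-verify that the two models still agree on the observed law in \emph{every} admissible serious context, not only at $s=\emptyset$. This is exactly the content of the paper's \cref{thm:s_only_hedge}-style construction (serious contexts are engineered to produce uniform conditional laws in both models, so they carry no distinguishing information, while the $S=\emptyset$ regime separates them). You flag this as ``the main obstacle common to all three cases'' but leave it as a promissory note; in the paper it is the substantive work of this case, not a routine check, and your claim that the context-specific independences ``give no leverage'' is precisely what needs a construction to establish.

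A secondary, smaller gap: you treat the district-to-query lifting as ``exactly the one used in the completeness proofs of the ID and gID algorithms.'' In the CS-SCM setting it is not quite: the paper's lifting argument must handle the possibility that children of $S$ lie on the forest connecting the root set of $\vec{D}^*$ to $\vec{Y}$ (their mechanisms again constrained by \cref{dfn:ics-scm}), chooses a value $s'$ laidback for that forest, and separately argues that lack of support on $s'$ only removes data and so cannot restore identifiability. These CS-SCM-specific adjustments are needed for the one-to-one mapping from $p(\vec{D}^*\mid\doo(\pas(\vec{D}^*)))$ to the query to go through, and your proposal does not supply them.
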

\vspace{-0.4cm}
\section{Conclusions}

In this paper, we have considered the problem of identification of causal effects in settings with multiple datasets, corresponding to the observational or interventional contexts derived from a causal model where units are selected into different contexts \emph{systematically}.  Unlike prior approaches, we represent systematic selection by means of an indicator random variable that is potentially related to other variables in the model in complicated ways.
We show that in the resulting \emph{Context Selected Structural Causal Model (CS-SCM)} systematic selection may be arranged into a hierarchy resembling the hierarchy of systematic censoring in missing data, with possible models including selected completely at random (SCAR), selected at random (SAR), and selected not at random (SNAR).  We show that in SCAR and SAR models, identification of interventional distributions may be obtained by a generalization of the g-formula.

In SNAR settings, where systematic selection and unobserved confounding are present, we provide a general identification algorithm which generalizes the gID algorithm \citep{leeGeneralIdentifiabilityArbitrary2022,kivvaRevisitingGeneralIdentifiability2022}, but which applies in causal models with arbitrarily complex types of systematic selection, and is able to achieve novel identification results using context-specific restrictions found in CS-SCMs.

\bibliography{uai2024-template}

\begin{thebibliography}{27}
\providecommand{\natexlab}[1]{#1}
\providecommand{\url}[1]{\texttt{#1}}
\expandafter\ifx\csname urlstyle\endcsname\relax
  \providecommand{\doi}[1]{doi: #1}\else
  \providecommand{\doi}{doi: \begingroup \urlstyle{rm}\Url}\fi

\bibitem[Athey et~al.(2020)Athey, Chetty, and Imbens]{atheyCombiningExperimentalObservational2020}
Susan Athey, Raj Chetty, and Guido Imbens.
\newblock Combining {{Experimental}} and {{Observational Data}} to {{Estimate Treatment Effects}} on {{Long Term Outcomes}}.
\newblock \emph{arXiv:2006.09676 [econ, stat]}, June 2020.

\bibitem[Bareinboim and Pearl(2012)]{bareinboimTransportabilityCausalEffects2012}
Elias Bareinboim and Judea Pearl.
\newblock Transportability of {{Causal Effects}}: {{Completeness Results}}.
\newblock \emph{Proceedings of the AAAI Conference on Artificial Intelligence}, 26\penalty0 (1), July 2012.
\newblock ISSN 2374-3468.

\bibitem[Bareinboim and Tian(2015)]{bareinboimRecoveringCausalEffects2015}
Elias Bareinboim and Jin Tian.
\newblock Recovering {{Causal Effects}} from {{Selection Bias}}.
\newblock \emph{Proceedings of the AAAI Conference on Artificial Intelligence}, page~7, 2015.

\bibitem[Boutilier et~al.(1996)Boutilier, Friedman, Goldszmidt, and Koller]{boutilierContextSpecificIndependenceBayesian1996}
Craig Boutilier, Nir Friedman, Moises Goldszmidt, and Daphne Koller.
\newblock Context-{{Specific Independence}} in {{Bayesian Networks}}.
\newblock In \emph{Proceedings of the {{Twelfth Conference}} on {{Uncertainty}} in {{Artificial Intelligence}} ({{UAI1996}})}, 1996.
\newblock \doi{10.48550/arXiv.1302.3562}.

\bibitem[Dawid(2021)]{dawidDecisiontheoreticFoundationsStatistical2021}
Philip Dawid.
\newblock Decision-theoretic foundations for statistical causality.
\newblock \emph{Journal of Causal Inference}, 9\penalty0 (1):\penalty0 39--77, January 2021.
\newblock ISSN 2193-3685.
\newblock \doi{10.1515/jci-2020-0008}.

\bibitem[Ghassami et~al.(2022)Ghassami, Yang, Richardson, Shpitser, and Tchetgen]{ghassamiCombiningExperimentalObservational2022}
AmirEmad Ghassami, Alan Yang, David Richardson, Ilya Shpitser, and Eric~Tchetgen Tchetgen.
\newblock Combining {{Experimental}} and {{Observational Data}} for {{Identification}} and {{Estimation}} of {{Long-Term Causal Effects}}, April 2022.

\bibitem[Huang and Valtorta(2006)]{huang06do}
Yimin Huang and Marco Valtorta.
\newblock Pearl's calculus of intervention is complete.
\newblock In \emph{Twenty Second Conference On Uncertainty in Artificial Intelligence}, 2006.

\bibitem[Huang and Valtorta(2008)]{huangCompletenessIdentifiabilityAlgorithm2008}
Yimin Huang and Marco Valtorta.
\newblock On the completeness of an identifiability algorithm for semi-{{Markovian}} models.
\newblock \emph{Annals of Mathematics and Artificial Intelligence}, 54\penalty0 (4):\penalty0 363--408, December 2008.
\newblock ISSN 1012-2443, 1573-7470.
\newblock \doi{10.1007/s10472-008-9101-x}.

\bibitem[Kivva et~al.(2022)Kivva, Mokhtarian, Etesami, and Kiyavash]{kivvaRevisitingGeneralIdentifiability2022}
Yaroslav Kivva, Ehsan Mokhtarian, Jalal Etesami, and Negar Kiyavash.
\newblock Revisiting the general identifiability problem.
\newblock In James Cussens and Kun Zhang, editors, \emph{Proceedings of the Thirty-Eighth Conference on Uncertainty in Artificial Intelligence}, volume 180 of \emph{Proceedings of Machine Learning Research}, pages 1022--1030. {PMLR}, August 2022.

\bibitem[Lauritzen(1996)]{lauritzen96graphical}
Steffen~L. Lauritzen.
\newblock \emph{Graphical Models}.
\newblock Oxford, U.K.: Clarendon, 1996.

\bibitem[Lee and Shpitser(2020)]{leeIdentificationMethodsArbitrary2020a}
Jaron J.~R. Lee and Ilya Shpitser.
\newblock Identification {{Methods With Arbitrary Interventional Distributions}} as {{Inputs}}.
\newblock \emph{arXiv:2004.01157 [cs, stat]}, April 2020.

\bibitem[Lee and Bareinboim(2020)]{leeCausalEffectIdentifiability2020}
Sanghack Lee and Elias Bareinboim.
\newblock Causal {{Effect Identifiability}} under {{Partial-Observability}}.
\newblock \emph{International Conference on Machine Learning}, page~10, 2020.

\bibitem[Lee et~al.(2019)Lee, Correa, and Bareinboim]{leeGeneralIdentifiabilityArbitrary2019}
Sanghack Lee, Juan~D Correa, and Elias Bareinboim.
\newblock General {{Identifiability}} with {{Arbitrary Surrogate Experiments}}.
\newblock \emph{Proceedings of the Conference on Uncertainty in Artificial Intelligence}, page~10, 2019.

\bibitem[Lee et~al.(2022)Lee, Correa, and Bareinboim]{leeGeneralIdentifiabilityArbitrary2022}
Sanghack Lee, Juan~D Correa, and Elias Bareinboim.
\newblock General {{Identifiability}} with {{Arbitrary Surrogate Experiments}}, 2022.

\bibitem[Malinsky et~al.(2019)Malinsky, Shpitser, and Richardson]{malinsky19po}
Daniel Malinsky, Ilya Shpitser, and Thomas~S. Richardson.
\newblock A potential outcomes calculus for identifying conditional path-specific effects.
\newblock In \emph{Proceedings of the 22nd International Conference on Artificial Intelligence and Statistics}, 2019.

\bibitem[Pearl(1988)]{pearl88probabilistic}
Judea Pearl.
\newblock \emph{Probabilistic Reasoning in Intelligent Systems}.
\newblock Morgan and Kaufmann, San Mateo, 1988.

\bibitem[Pearl(2009)]{pearl09causality}
Judea Pearl.
\newblock \emph{Causality: Models, Reasoning, and Inference}.
\newblock Cambridge University Press, 2 edition, 2009.
\newblock ISBN 978-0521895606.

\bibitem[Pensar et~al.(2015)Pensar, Nyman, Koski, and Corander]{pensarLabeledDirectedAcyclic2015}
Johan Pensar, Henrik Nyman, Timo Koski, and Jukka Corander.
\newblock Labeled directed acyclic graphs : A generalization of context-specific independence in directed graphical models.
\newblock \emph{Data mining and knowledge discovery}, 29\penalty0 (2):\penalty0 503--533, 2015.

\bibitem[Richardson and Robins(2013)]{thomas13swig}
Thomas~S. Richardson and Jamie~M. Robins.
\newblock Single world intervention graphs ({SWIG}s): A unification of the counterfactual and graphical approaches to causality.
\newblock \emph{preprint: \url{http://www.csss.washington.edu/Papers/wp128.pdf}}, 2013.

\bibitem[Richardson et~al.(2023)Richardson, Evans, Robins, and Shpitser]{richardsonNestedMarkovProperties2023}
Thomas~S. Richardson, Robin~J. Evans, James~M. Robins, and Ilya Shpitser.
\newblock Nested {{Markov}} properties for acyclic directed mixed graphs.
\newblock \emph{The Annals of Statistics}, 51\penalty0 (1):\penalty0 334--361, February 2023.
\newblock ISSN 0090-5364, 2168-8966.
\newblock \doi{10.1214/22-AOS2253}.

\bibitem[Robins(1986)]{robins86new}
James~M. Robins.
\newblock A new approach to causal inference in mortality studies with sustained exposure periods -- application to control of the healthy worker survivor effect.
\newblock \emph{Mathematical Modeling}, 7:\penalty0 1393--1512, 1986.

\bibitem[Robins and Richardson(2010)]{robins10alternative}
James~M. Robins and Thomas~S. Richardson.
\newblock Alternative graphical causal models and the identification of direct effects.
\newblock \emph{Causality and Psychopathology: Finding the Determinants of Disorders and their Cures}, 2010.

\bibitem[Robins et~al.(2023)Robins, Richardson, and Shpitser]{rrs20volume_mediation_jmlr}
James~M. Robins, Thomas~S. Richardson, and Ilya Shpitser.
\newblock An interventionist approach to mediation analysis.
\newblock \emph{Journal of Machine Learning Research (to appear)}, 2023.

\bibitem[Rubin(1976)]{rubinInferenceMissingData1976}
Donald~B. Rubin.
\newblock Inference and {{Missing Data}}.
\newblock \emph{Biometrika}, 63\penalty0 (3):\penalty0 581--592, 1976.
\newblock ISSN 0006-3444.
\newblock \doi{10.2307/2335739}.

\bibitem[Shpitser and Pearl(2006)]{shpitserIdentificationJointInterventional2006}
Ilya Shpitser and Judea Pearl.
\newblock Identification of {{Joint Interventional Distributions}} in {{Recursive Semi-Markovian Causal Models}}.
\newblock \emph{Proceedings of the National Conference on Artificial Intelligence}, page~8, 2006.

\bibitem[Tian and Pearl(2002)]{tianIdentificationCausalEffects2002}
Jin Tian and Judea Pearl.
\newblock On the {{Identification}} of {{Causal Effects}}.
\newblock Technical Report R-290-L, {UCLA}, 2002.

\bibitem[Verma and Pearl(1990)]{verma90equiv}
Thomas~S. Verma and Judea Pearl.
\newblock Equivalence and synthesis of causal models.
\newblock Technical Report R-150, Department of Computer Science, University of California, Los Angeles, 1990.

\end{thebibliography}

\newpage

\onecolumn

\title{A 
General Identification Algorithm For Data Fusion Problems Under Systematic Selection\\(Supplementary Material)}
\maketitle


\appendix
\section{Proofs}



\csswigs*
\begin{proof}

 This result is established by proof of soundness of the CSI-separation criterion in \cite{boutilierContextSpecificIndependenceBayesian1996}, definition of CS-ICMs (specifically, definition of seriousness of $S$ with respect to any element of $\ch_{\bar{\G}}(S)$), and a direct extension of results in \citep{thomas13swig}.
\end{proof}

\icsscm*
\begin{proof}
Soundness follows by application of the g-formula to the CS-SCM \citep{robins86new}, and the definition of our query.

Completeness holds by the following argument. Assume that $S$ is never laidback for some element of $\vec{Y}^*$, which we call $Z$. Then, it is possible to produce two models which have different distributions on $Z$ when $S=\emptyset$. It is then possible to construct two models
which agree on the observed data distribution (which only includes elements in $\mathfrak{X}^+_S$), but disagrees on
$p(\vec{Y}(a, S=\emptyset))$.
In particular, fix $Y \in \vec{Y}$.  Since $Z \in \vec{Y}^*$, there must be a directed path from $Z$ to $Y$.
Consider two elements of the causal model where the only edges are on the directed path from $Z$ to $Y$, and all vertices are otherwise mutually independent.
Then it is straightforward to construct two elements of the causal model where the mapping from $p(Z)$ to $p(Y)$ given by $\sum_Z p(Y \mid Z)$ is one to one.  Since the rest of the vertices of the model as mutually independent, we have that in the two elements we are considering, $p(\vec{Y}(a, S=\emptyset)) = \prod_{\tilde{Y} \in \vec{Y}} p(\tilde{Y})$.  This immediately yields non-identification since we can construct two elements that agree on $p(\tilde{Y})$ for every $\tilde{Y} \in \vec{Y} \setminus \{ Y \}$, and indeed on the observed data distribution, but disagree on $p(Y)$.
\end{proof}

\begin{restatable}[Hedge for $S=\emptyset$ interventions]{thm}{sonlyhedge}\label{thm:s_only_hedge}

Let $\G$ be a graph with vertex set $\vec{V}$, with $S \in \vec{V}$ representing an CS-SCM. Let $\vec{F}', \vec{F}$ be bidirected-connected sets in $\G$ where $\vec{F} \subset \vec{F}'$, and $S \in \vec{F}' \setminus \vec{R}$, and $\vec{R}$ is the root set of both $\vec{F}$ and $\vec{F}'$. 
Then, $p(\vec{F} \mid \doo(S=\emptyset))$ is not identified.

\end{restatable}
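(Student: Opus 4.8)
The plan is to adapt the classical hedge construction for non-identification (from \citep{shpitserIdentificationJointInterventional2006}) to the setting where the intervened-upon ``variable'' is the selector $S$ set to $\emptyset$, being careful that the $S=\emptyset$ intervention only forces $\vec{s}^e_{\vec V}=\vec 0$ and does nothing else to structural equations. First I would set up two CS-SCMs $M_1,M_2$ that share the graph $\G$ but differ only in the structural equations of the variables in $\vec F'$; outside of $\vec F'$ both models will use, say, fair coin flips so that those variables are mutually independent and carry no information. The districts $\vec F \subset \vec F'$ with common root set $\vec R$, and $S \in \vec F' \setminus \vec R$, form exactly the combinatorial skeleton of a hedge; the key feature I will exploit is that $S$ lives in the larger district $\vec F'$ but not among the roots $\vec R$, so its behavior is ``hidden'' inside the bidirected-connected structure that must be fixed away.

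The core of the argument is the usual two-model trick. On $\vec F'$ I would let every variable (including $S$, and including every child-of-$S$ relationship, which I treat via the indicators $S^e_V$) be driven by a single shared bit of exogenous noise $\epsilon^*$ together with its parents, arranged so that: (i) in model $M_1$ the variables in $\vec F'$ are ``fully correlated'' through $\epsilon^*$ in a way that makes $p_{M_1}(\vec F \mid \doo(S=\emptyset))$ a specific non-trivial function; (ii) in model $M_2$ the variables in $\vec F' \setminus \vec F$ are made independent of $\vec F$, so that $p_{M_2}(\vec F \mid \doo(S=\emptyset))$ is, say, uniform; and (iii) crucially, $M_1$ and $M_2$ induce the \emph{same} observed data distribution $p(\vec V)$ over the admissible support $\mathfrak X^+_S$. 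Point (iii) is where the hedge hypothesis does its work: because $\vec F$ and $\vec F'$ share the root set $\vec R$ and are both bidirected-connected, any fixing sequence needed to isolate $\vec F$ from $\vec F'$ must fix a vertex in $\vec F' \setminus \vec R$ that is simultaneously a descendant and in the same district, so identification of $q_{\vec F}$ from $q_{\vec F'}$ fails; concretely, summing out the ``extra'' vertices of $\vec F' \setminus \vec F$ (one of which is $S$, or involves $S$) washes out the correlation in exactly the way that makes the two models' observed laws coincide while their $S=\emptyset$-interventional laws on $\vec F$ differ.

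The steps, in order: (1) recall that under $\doo(S=\emptyset)$ the CS-SCM reduces to the base SCM on $\bar\G^{[\emptyset]}$, so $p(\vec F \mid \doo(S=\emptyset))$ is a well-defined post-intervention marginal and $S$ itself becomes a constant-indicator node but still carries its exogenous noise into its district; (2) build the shared-noise construction on $\vec F'$ exactly as in the standard hedge proof, with all vertices outside $\vec F'$ independent uniform; (3) verify $p_{M_1}(\vec V) = p_{M_2}(\vec V)$ on $\mathfrak X^+_S$ by checking that marginalizing $\vec F' \setminus \vec F$ (which, as in the classical hedge, is forced because $S \in \vec F'\setminus\vec R$ cannot be fixed) collapses the distinguishing correlation; (4) verify $p_{M_1}(\vec F \mid \doo(S=\emptyset)) \neq p_{M_2}(\vec F \mid \doo(S=\emptyset))$ directly from the construction; (5) conclude non-identification. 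The main obstacle I anticipate is step (3) in the presence of the context-specific structure: I must make sure that the labelled edges from $S$ (which vanish in $\bar\G^{[s]}$ for serious $s$) do not accidentally give the analyst extra leverage to separate the two models on the admissible support $\mathfrak X^+_S$ — i.e. that the context-specific independencies of the CS-SCM do not defeat the hedge. I expect to handle this by choosing the construction so that the distinguishing information flows only through bidirected (latent) connections within $\vec F'$, which are untouched by the edge-label removal operation, so that $\bar\G^{[s]}$ for any admissible $s$ still contains the obstructing hedge structure.
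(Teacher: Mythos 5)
Your overall strategy (a parity-style two-model construction adapted from the classical hedge, agreeing on the observed law over $\mathfrak{X}^+_S$ but disagreeing under $\doo(S=\emptyset)$) is the same family of argument the paper uses, but two of your load-bearing steps have genuine problems. First, your step (3) argues observational equivalence by ``summing out the extra vertices of $\vec{F}'\setminus\vec{F}$ (one of which is $S$)'' and by appealing to the fact that $S$ cannot be fixed. Nothing is summed out: $S$ and all of $\vec{F}'\setminus\vec{F}$ are observed, so the two models must agree on the \emph{full} joint $p(\vec{V})$ in every supported context, including serious ones. The failure of a fixing sequence is what motivates looking for a counterexample, but it cannot be invoked to establish that the counterexample works; agreement has to be verified by direct computation. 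In the paper this is exactly where the work lies: conditional on $\vec{s}^e=\vec{0}$ both models place equal mass on even-parity configurations, and conditional on any $S^e_C=1$ both become uniform because the forced behavior of that child leaves a latent with only one path into the district -- and only under $\doo(S=\emptyset)$ does the ``double counting'' of the latent tying $S$ (or $\an(S)$) to the district break in one model but not the other.

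Second, your construction does not engage with the CS-SCM semantics of Definition~1: children of $S$ are \emph{forced} to equal $S^v_V$ whenever $S^e_V=1$, and the observed data over $\mathfrak{X}^+_S$ contains such serious contexts, so you are not free to let a ``single shared bit $\epsilon^*$'' drive all of $\vec{F}'$ (which in general is also not Markov to the given latent structure -- latents may only sit on the actual bidirected edges). This constraint is precisely why the paper states that the off-the-shelf hedge construction is unusable here, and why its proof (i) splits $S$ into components $S^e_C,S^v_C$ with their own exogenous parents, (ii) performs a case analysis on whether the bidirected path from a child $L$ of $S$ enters $\an(S)$ through $S$ itself or through $\an(S)\setminus\{S\}$, since the two topologies require different gadgets, and (iii) when the distinguished district $\vec{D}'_L$ does not contain the root set $\vec{R}$, finishes with a downward one-to-one extension carrying non-identification of $p(\vec{D}'_L\mid\doo(S=\emptyset))$ to $p(\vec{R}\mid\doo(S=\emptyset))$ and hence to $p(\vec{F}\mid\doo(S=\emptyset))$. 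Your sketch omits all three of these ingredients, so as written it does not yet constitute a proof.
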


\begin{proof}
We first begin by defining an edge subgraph $\G'$ of $\G$, in which we retain all vertices, all bidirected edges, all directed edges in $\an_\G (S)$, and all directed edges from $S$ to its children. Define $\vec{Z} = \cl_{\G'} (\vec{D}^*) \setminus \an_{\G'} (S)$.

We then consider the districts $\vec{D}' \in {\cal D}(\G'_{\vec{Z}})$. These districts are all bidirected connected components that have bidirected edges to $\an_{\G'} (S)$, and may or may not have children of $S$.

Choose a particular child of $S$, call it $L$. We consider $\vec{D}'_L \in {\cal D}(\G'_{\vec{Z}})$ which contains $L$.

Then, $L$ may be connected to $S$ in one of two ways by a bidirected path (which exists solely in $\vec{D}'_L \cup \an_{\G'} (S)$) - either this bidirected path enters $\an_{\G'}(S)$ via $S$, or via $\an_{\G'} (S) \setminus \{S\}$.

\begin{enumerate}
    \item

\textbf{Bidirected path enters $\an_{\G'}(S)$ via $S$}: In the first case, we can isolate the bidirected path from $L$ to $S$. Denote the variables on this path as $W_1, \ldots, W_k$, where some subset of these variables may also be children of $S$. Then it is possible to provide a modified hedge construction in which $p(L, \vec{W} \mid \doo(S=\emptyset))$ is not identified.

We first describe a process to augment the graph $\G'$, which we denote as $\G''$. $\G''$ inherits all vertices of $\G'$, but only edges present in the subgraph over $\G'_{\vec{W} \cup \{L, S\}}$. We then split $S$ into an separate nodes $S^e_{C_i}$ for each child $C_i$, and these $S^e_C$ form a bidirected chain $S^e_Y \leftrightarrow S^e_{C_1} \leftrightarrow \ldots S^e_{C_k}$. The last $S^e_{C_k}$ inherits the bidirected edge into $S$. For each child $C$, we replace the $S \to C$ edge with $S^e_C \to C$.  For each child $C$, we create $S^v_C$ that has a single unobserved variable $U_{S^v_C}$ as parent, and has directed edge $S^v_C \to C$. Finally, we replace each bidirected edge $V \leftrightarrow W$ with $V \leftarrow U_{V, W} \rightarrow W$, where $U_{V, W}$ denotes an unobserved variable. $\G''$ is a subgraph of $\G'$ if you marginalize $\vec{U}$ and perform a cartesian product operation on all $\vec{S}^e = \{S^e_C\}_{C \in \ch_{\G'}(S)}, \vec{S}^v = \{S^v_C \}_{C \in \ch_{\G'}(S)}$.

Given $\G''$, we now describe a procedure to construct two models respecting \cref{dfn:ics-scm} which agree on the observed distribution, but disagree on the causal effect $p(\vec{F} \mid \doo(S=\emptyset))$. Let the cardinality of all variables to 2. In model 1, the value of each variable is equal to the bit parity of the parents. If $S^e_C$ is in the parents, then if $S^e_C = 0$ it takes on the bit parity of the other parents (noting that this is equivalent to the bit parity of all parents, since $0 \oplus X = X$ for any bit $X$), and if $S^e_C = 1$ then the variable takes on the value of $S^v_C$.  The same is true in model 2, except $W_1$ does not pay attention to the bit connecting it and the last $S^e_C$. 

In the observational setting, in both models, the structural equations are the same for $\vec{S}^e, \vec{S}^v$. In model 1, when $\vec{S}^e = \vec{0}$, $L \cup \vec{W}$ effectively counts the bit parity of each $U$ in $\G''$ twice, since $S^e_L$ is zero only if the $U'$ connecting it to $W_1$ is zero. This forms a distribution where values of $L \cup \vec{W}$ which have even bit parity have equal probability, and there is no probability mass elsewhere. If any $S^e_C = 1$, then there exists at least one such $U$ which has only one path to $L \cup \vec{W}$ and so the distribution is uniform. In model 2, when $\vec{S}^e = 0$, $L \cup \vec{W}$ effectively counts the bit parity of each $U$ in $\G''_{L \cup \vec{W}}$ twice, ignoring the $U$ that connects $\vec{W}$ to $S$, resulting in a distribution where values of $L \cup \vec{W}$ which have even bit parity having equal probability and no mass elsewhere. As with model 1, when any $S^e_C = 1$ then the  distribution becomes uniform. Thus the observed data distributions agree.

Under an intervention $S = \emptyset$, this has the effect of ensuring that in model 2, the bit parity of $L \cup \vec{W}$ is always even, whereas in model 1 it is uniform, because there is the contribution of the $U'$. This establishes the non-identification of $p(\vec{D}'_L \mid \doo(S=\emptyset))$.

    \item 
\textbf{Bidirected path enters $\an_{\G'}(S)$ via $\an_{\G'} (S) \setminus \{ S \}$}: Since $S$ must have at least one child, we select any child at random; call this child $L$. We consider the districts of $\G'_{\vec{Z}}$, and define $\vec{D}'_L$ to be the district containing $L$. Define $\vec{R}$ to be the root set of $\vec{F}$.

We first begin by providing an augmented graph $\G''$ constructed from $\G'$. This graph retains all vertices of $\G'$, all children of $S$, all directed edges in the ancestors of $S$. We then split $S$ according to the following procedure: $S$ is replaced with vertices $S^e_C, S^v_C$ for each $C \in \ch_{\G'} (S)$. $S^e_L$ inherits all incoming arrowheads previously into $S$. We create directed edges $S^e_C \to C, S^v_C \to C$ for each child $C$. For all $S^v_C$, we add an unobserved variable $U_{S^v_C} \to S^v_C$, and for all $S^e_C$ which is not $S^e_L$, we add $U_{S^e_C} \to S^e_C$. We note that $\G''$ is an edge subgraph of $\G'$ if the $\vec{U}$ are latent projected, and the $\vec{S}^e, \vec{S}^v$ are grouped into a single vertex via the cartesian product operator.

Next, we will now define a CS-SCM, by modifying the construction of \cite{shpitserIdentificationJointInterventional2006} in such a way to respect the context-specific restrictions. 

In model 1, if $V \not \in \ch_{\G''} (S)$ then $V \equiv \oplus \pa_{\G''}(V)$. Otherwise, $V \equiv \begin{cases} \oplus \pa_{\G''}(V) & S^e_V = 0 \\ S^v_V & S^e_V = 1\end{cases}$. 

In model 2, the same is true, except for variables in $\vec{D}'_L$. For those variables, if $V \not \in \ch_{\G''} (S)$ then they only pay attention to parents in $\vec{D}'_L$, and otherwise $V \equiv \begin{cases} \oplus \pa_{\G''_{\vec{D}'_L}}(V) & S^e_V = 0 \\ S^v_V & S^e_V = 1\end{cases}$.

In the observational distribution, both models induce the same distribution. First, we point out that for variables outside of $\vec{D}'_L$, the structural equations (and therefore the parts of the distribution associated with those variables) are the same. Next, we consider variables in $\vec{D}'_L$. First we point out that the distributions over $\vec{S}^e$ are the same in both models, and are uniform random distributions. When $\vec{S}^e = 0$, in model 1 variables in $\vec{D}'_L$ count the bit parity of their parents twice, whereas in model 2 variables in $\vec{D}'_L$ count the bit parity of their parents in $\vec{D}'_L$ twice. Either way this induces a conditional distribution (of $p(\vec{D}'_L \mid \vec{S}^e, \vec{S}^v)$) with equal probability over even bit parities, and zero probability otherwise. When there exists a child of $S$ such that $\vec{S}^e_C = 1$, then in both models the conditional distribution $p(\vec{D}'_L \mid \vec{S}^e, \vec{S}^v)$ is uniform because there exists at least one $U$ which has only one path down to $\vec{D}'_L$.

In the interventional distribution where the intervention $\doo(S=\emptyset)$ is applied, the distributions in the two models differ. In model 1, the distribution over $\vec{D}'_L$ is uniform, because the $U$ variables whcih connect $\an_{\G''} (S)$ to $\vec{D}'_L$ now only have one path to $\vec{D}'_L$. In model 2, the distribution has equal mass assigned to even bit parities for $\vec{D}'_L$, and no mass to other values. 

This establishes the non-identification of $p(\vec{D}'_L \mid \doo(S=\emptyset))$.

\end{enumerate}
If 
$\vec{R} \subseteq \vec{D}'_L$, then we have a witness for
the non-identifiability of $p(\vec{R} \mid \doo(S=\emptyset))$, and since $\vec{R} \subseteq \vec{F}$ this proves the claim.

Otherwise, since the intervention is $S = \emptyset$, all edges in the graph $\bar{\G}^{[]}$ may be used in our construction of counterexamples.
In particular, we can employ the downward extension of Theorem 4 found in \cite{shpitserIdentificationJointInterventional2006} 
to both elements of the causal model we are constructing as counterexamples. This gives
\begin{align*} 
p(\vec{R} \mid \doo(S = \emptyset)) &= \sum_{\vec{D}'_C} p(\vec{R} \mid \vec{D}'_C, S=\emptyset)  p(\vec{D}'_C \mid \doo (S= \emptyset)) \\
\end{align*}
where to suffices to choose  $p(\vec{F} \mid \vec{R}, S = \emptyset)$ that will yield a one to one mapping from $p(\vec{D}'_C \mid \doo (S= \emptyset))$ to $p(\vec{R} \mid \doo(S = \emptyset))$ in the above equation. 

\end{proof}

\sidsoundness*

\begin{proof}
The ID algorithm is sound for queries from models in the CS-SCM. This is because the CS-SCM is a submodel (contains more restrictions) than the models considered in \cite{shpitserIdentificationJointInterventional2006}, and the ID algorithm was established to be sound in that same paper.

To see that the ID algorithm is complete for this query, we need to establish that whenever the ID algorithm fails, we can construct two models which have the same observed data distribution but different counterfactual distributions for $p(\vec{V} \mid \doo(S = \emptyset))$

As established in \cite{shpitserIdentificationJointInterventional2006,richardsonNestedMarkovProperties2023}, the ID algorithm fails when there is a district $\vec{D}^* \in {\cal D}(\G_{\vec{V} \setminus \{S\}})$ whose 
closure $\cl_\G (\vec{D}^*)$ is such that $\vec{D}^* \subset \cl_\G (\vec{D}^*)$. Furthermore, we can establish that $ S \in \cl_\G (\vec{D}^*) \setminus \vec{D}^*$, and that $\vec{D}^*$ must not be a district of $\G$, since otherwise $\vec{D}^*$ is reachable in $\G$.

When such a $\vec{D}^*$ is encountered in the process of running the ID algorithm, we will return a construction from \cref{thm:s_only_hedge}. The original hedge construction of \cite{shpitserIdentificationJointInterventional2006} is not suitable because it does not incorporate the special behavior introduced via the $S$ context variable. This bears witness to the non-identifiability of $p(\vec{D}^* \mid \doo(S=\emptyset))$.

Because $\vec{D}^* \subseteq \vec{V}$, it immediately follows that $p(\vec{V} \mid \doo(S = \emptyset))$ is not identified.


\end{proof}

\generalsoundness*
\begin{proof}

The algorithm aims to identify $p(\vec{Y} \mid \doo(\vec{a}, S=\emptyset))$ in the causal model $\G$, with additional restrictions pertaining to the semantics of $S$, and its relationship to its children in $\G$, from the observed data distribution $p(\vec{V})$. These restrictions do not affect district factorizations of the observed and interventional distributions which hold due to \cite{shpitserIdentificationJointInterventional2006,richardsonNestedMarkovProperties2023}.

Then, for value assignment $v \in \mathfrak{X}_{\vec{V}}$,
\[p(\vec{Y} = v_{\vec{Y}} \mid \doo(\vec{a}, S=\emptyset)) = \sum_{\vec{Y^*} \setminus \vec{Y}} \prod_{D \in {\cal D}_{\G_{\vec{Y}^*}}} p(v_D \mid \doo (v_{\pas_\G (D)})), \]
where values $v_{\pas_\G (D)})$ in each term are consistent with $\vec{a}$ and $S=\emptyset$.

Each term is identified by one of three cases.

The first case is triggered at \ARef*{alg:regular_fixing_kernel}. In this case, we are justified in the choice of using any laid-back value $s$ for $\vec{D}^*$, because either $S$ is independent of $\vec{D}^*$ given its Markov blanket, or because the structural equations are all the same under those values due to mechanism invariance 
implied by the definition of the CS-SCM (see the restriction on the structural equation when $S^e_V = 0$ in \cref{dfn:ics-scm}). For that value of $s$, and kernels evaluated to that value, soundness follows by the standard soundness argument of the ID algorithm \citep{shpitserIdentificationJointInterventional2006,richardsonNestedMarkovProperties2023}, which holds in any SCM with independent errors, and thus also in an CS-SCM.

The second case is triggered at \ARef*{alg:s_fixing_kernel}, where $\vec{D}^* \subset \cl_\G (\vec{D}^*)$ and $S \not \in \cl_\G (\vec{D}^*)$. This follows from the soundness of the gID algorithm \citep{leeGeneralIdentifiabilityArbitrary2019}. Specifically, this case shows that the distribution $p(\cl_\G (\vec{D}^*) \mid \doo(\pas_\G (\cl_\G ( \vec{D}^*))))$ is identified, and represents the observed data distribution corresponding to a causal model represented by graph $\G_{\cl_\G (\vec{D}^*)}$. Since $S \not \in \cl_{\G} (\vec{D}^*)$, the available datasets in this model may be reformulated as observational and interventional distributions on $\cl_{\G} (\vec{D}^*)$ indexed by values of $S$. These are precisely the inputs of the gID algorithm, and soundness follows by the soundness of that algorithm. 

 The third case is triggered at \ARef*{alg:cs_general_call}, where $\vec{D}^* \subset \cl_\G (\vec{D}^*)$ and $S \in \cl_\G (\vec{D}^*)$. 

Pick an appropriate value $\bar{s}$. We note that
$p(\vec{D}'(\bar{s},\pas(\vec{D}')))$ is identified because of the following derivation:
\begin{align*}
p(\vec{D}'(\bar{s},\pas(\vec{D}'),\vec{V} \setminus \cl_{\G}(D^*)))
&=
\prod_{D \in \vec{D}'} q_{\cl_{\G}(D^*)}(D(\bar{s}) \mid \{ W(\bar{s}) : W \in \pre_{\prec}(D) \})\\
&=
\left(
\prod_{D \in \vec{D}' \cap \de_{\G}(S)}
q_{\cl_{\G}(D^*)}(D(\bar{s}) \mid \{ W(\bar{s}) : W \in \pre_{\prec}(D) \})
\right) \times\\
& \times 
\left(
\prod_{D \in \vec{D}' \cap \nd_{\G}(S)}
q_{\cl_{\G}(D^*)}(D(\bar{s}) \mid \{ W(\bar{s}) : W \in \pre_{\prec}(D) \})
\right)\\
&=
\left(
\prod_{D \in \vec{D}' \cap \de_{\G}(S)}
q_{\cl_{\G}(D^*)}(D \mid S = \bar{s}, \{ W : W \in \pre_{\prec}(D) \})
\right) \times\\
& \times 
\left(
\prod_{D \in \vec{D}' \cap \nd_{\G}(S)}
q_{\cl_{\G}(D^*)}(D \mid \{ W : W \in \pre_{\prec}(D) \})
\right)
\end{align*}
where $\pre_{\prec}(D)$ is the subset of $\vec{D}' \cup \pas_{\G}(\vec{D}')$ earlier than $D$ under some ordering $\prec$ topological for $\G$.  Here the first equality follows by the top level district factorization of any interventional distribution in an SCM, the second equality follows by arranging terms, and the third by independence, and rule 3 of the po-calculus \citep{malinsky19po}.

Given this, soundness of the third case follows from soundness of the ID algorithm formulated via the fixing operator.

\end{proof}

\partialcompleteness*
\begin{proof}
To demonstrate non-identification at each of these failure points, we will provide a construction.

\begin{enumerate}
  \item \ARef*{alg:general_fail_positivity}: If there is no $s$ laidback for $\vec{D}^*$ then identification fails. This follows from the fact that identifying the joint distribution $p(\vec{D}^* \mid \text{do}(\pas_{\cal G}(\vec{D}^*) ))$ is impossible if only marginal distributions of this joint distribution are available and random variables in $\vec{D}^*$ do not exhibit additional factorization structure, since $\vec{D}^*$ is a district.

    

    Since joint distributions cannot be recovered from marginal distributions without further assumptions, we fail to identify $q_{V}(\vec{D}^* \mid \pas({\vec{D}^*}))$. Recall that $S$ cannot be in $\vec{D}^*$ due to the definition of $\vec{Y}^*$.

    This case handles the degenerate identification failure case of the gID algorithm (e.g. Section 3.1 of \cite{leeGeneralIdentifiabilityArbitrary2019}, or Section 3 of supplementary of \cite{kivvaRevisitingGeneralIdentifiability2022}, both provide details of explicit constructions showing this), where we may consider a model where $S$ has no parents and siblings to mimic the constructions.

    \item \ARef*{alg:general_fail_thicket}: A failure at this point involves the district $\vec{D}^*$ not containing $S$, but for each $s$ that is both consistent with $a_{\pas(\vec{D}^*)}$ and laidback for $\vec{D}^*$, $\vec{D}^*$ is not reachable in the corresponding $\G^{[s]}$. Then, we may adopt either the thicket construction of \cite{leeGeneralIdentifiabilityArbitrary2022} or corresponding alternative in \cite{kivvaRevisitingGeneralIdentifiability2022}, where $S$ is now viewed simply as an indexing operator for the various distributions that are inputs into gID. Then, the thicket construction immediately witnesses the non-identifiability of the desired causal effect $p(\vec{D}^* \mid \doo(a, S=\emptyset))$. 
    \item \ARef*{alg:cs_fail_1}: Since $\ch^*(S)$ is empty, there are two possibilities -- it is either the case that $S$ also has no children in $\vec{D}^*$,  or $S$ has children in $\vec{D}^*$.  
    
    In the first case the construction is simply given as a regular hedge per \cite{shpitserIdentificationJointInterventional2006}, since $S$ has no children inside. 

    In the second case, we will modify the argument of \cite{shpitserIdentificationJointInterventional2006} slightly in order to respect the constraints of \cref{dfn:ics-scm}. For variables in $\cl(\vec{D}^*)$ which do not have $S$ as an parent, the structural equations are exactly as they appear in \cite{shpitserIdentificationJointInterventional2006} in both models. For other variables $V \not \in \ch(S) \cap \vec{D}^*$, when $S$ is laidback for $V$ (meaning $S^e_V = 0$), the variable is equal to the bit parity of its parents as defined in \cite{shpitserIdentificationJointInterventional2006}. When $S$ is serious for $V$ (meaning that $S^e_V = 1$) the variable is equal to $S^v_V$ as required by \cref{dfn:ics-scm}.

    This construction is a valid witness, and the proof is essentially the argument laid out in the second part of the proof for \cref{thm:s_only_hedge}. 


\end{enumerate}



    Given a non-identification structure for $p(\vec{D}^* \mid \doo (a, S=\emptyset))$ in any of the above cases, we now consider an argument for showing that $p(\vec{Y} \mid \doo (\vec{a}, S=\emptyset))$ is not identified. 
    
    We first note that $\vec{A} \cap \pas(\vec{D}^*)$ is guaranteed to be not empty, as this is the only way $\cl(\vec{D}^*) \subset \vec{D}^*$. 

    Let $\vec{R}^*$ be the set of variables with no children in  graph $\G_{\cl(\vec{D}^*)}$. Let $\G'$ be the edge subgraph used in this construction, which consists of the edges in $\G_{\cl(\vec{D}^*)}$, and a subset of edges in $\G_{\vec{Y}^*}$ that form a forest from the root down to $\vec{Y}$. Let $\vec{Y}' \subseteq \vec{Y}$ be the subset of $\vec{Y}$ where the forest connects. 

    If $\ch(S)$ are not in this forest the original construction as detailed in Theorem 4 of  \cite{shpitserIdentificationJointInterventional2006} holds.

    Otherwise, without loss of generality, we may assume that there exists a value $s' \in \mathfrak{X}_S$ such that the forest connecting $\vec{R}$ to $\vec{Y}'$ is laid-back. Variables along the forest are equal to the bit parity of their parents if $S$ is laid-back for that variable, and equal to the suitable bit of $S$ if $S$ is serious for that variable. Then, we employ in that value a suitable one-to-one construction \citep{shpitserIdentificationJointInterventional2006,leeGeneralIdentifiabilityArbitrary2019} in both models. This gives
     \begin{align*}
         p(\vec{Y}'\mid \doo(\vec{a}, S =\emptyset)) &= \sum_{\vec{D}^*} p(\vec{Y}' \mid \vec{R}, \doo(S=\emptyset) ) p(\vec{R} \mid \doo(a, S=\emptyset))\\
     \end{align*}
     For all other variables in $\G$ that do not appear in the forest, we may assume uniform distributions that are identical in both models. Then, for values $s$ which are laid-back for the forest, $p(\vec{Y}' \mid \vec{R}, \doo( S=\emptyset)))$ will disagree between the two models, and for other values of $S$, they will agree (at least from the earliest serious variable onwards). However, this still suffices to prove that $p(\vec{Y}' \mid \doo(\vec{a}, S=\emptyset))$ is not identified. Since this is a margin of $p(\vec{Y} \mid \doo (a, S=\emptyset))$, this proves that the latter is also not identified.

    Finally, we note that if we didn't have positive support on the value $s'$, then this would correspond to having less data available, and the causal target would still not be identified with less data.

\end{proof}
\section{Examples}
\subsection{Full Example}
\begin{ex}[continues=ex:general-case-example]
Throughout this example, we will use the shorthand $s_{\vec{a}}$ to mean any value of $S$ where $\{ s^e_{A} = 1, s^v_A = \vec{a}_A : A \in \vec{A} \}$.

We first note that $\vec{Y}^* = \{Y, M, W_1, W_2, C\}$, and ${\cal D}(\bar{\G}^{[]}_{\vec{Y}^*})\!=\!\{\vec{D}^*_1 \!=\!\{Y, W_2\}$, $\vec{D}^*_2\!=\!\{M \}$,$\vec{D}^*_3\!=\!\{W_1\}$,$\vec{D}^*_4\!=\!\{C\}\}$. 

$\vec{D}^*_1$ invokes \ARef*{alg:cs_general_call}.
Per \ARef*{alg:encapsulating_district}, the relevant district of $\bar{\G}^{[]}({s})$ encapsulating $\vec{D}^*_1$ is $\vec{D}' = \{Y,  A_3, W_2\}$, for $s = s_{a_1,a_2}$.


$\vec{D}'$ triggers \ARef*{alg:perfect_iv_trick}, since 
$Y(a_1, a_2) {\ci} S \mid M(a_1), W_1(a_2), W_2, A_3$ and $\vec{D}^*_1$ is reachable in $\bar{\G}^{[{s}]}({s})_{\vec{D}'}$. Then, 
$q_{\vec{D}'}^{\bar{s}, \vec{z}}(\vec{D}' | \pas(\vec{D}')) =$ $p(Y | M, W_2, W_1, C,s_{a_1,a_2},A_3) p(W_2,A_3)$, and $q_{\vec{D}_1^*}(\vec{D}_1^* | \pas(\vec{D}^*_1))$$=\phi_{A_3} (q_{\vec{D}'}^{\bar{s}, \vec{z}}(\vec{D}' | \pas(\vec{D}')); \bar{\G}^{[\bar{s}]} (\bar{s})_{\vec{D}'})$\\$=\sum_{A_3} p(Y | M,W_2,W_1,C,s_{a_1,a_2},A_3) p(W_2,A_3)$.




$\vec{D}^*_2$ reaches \ARef*{alg:s_fixing_kernel}, with 
$\tilde{q}=\phi_{\vec{V} \setminus \{M,A_1\} }(p(\vec{V}); \tilde{\G}^{[]})=p(M, A_1 | S)$, yielding
 $q_{\vec{D}^*_2}(\vec{D}^*_2 | \pas(\vec{D}^*_2)) = \phi_{A_1}(\tilde{q}; \tilde{\G}^{[s_{a_1}]}) \vert_{a_1} = p(M | a_1, s_{a_1})$.

$\vec{D}^*_3$ reaches \ARef*{alg:cs_general_call}
with
$\bar{q} = q_{\{W_1,A_2,S\}}(W_1,A_2,S | W_2,A_3,C)$ and the corresponding LS-CADMMG.  Let ${s} = s_{a_2}$.
Per \ARef*{alg:encapsulating_district}, the district of $\bar{\G}^{[]}({s})$ encapsulating $\vec{D}^*_3$ is $\vec{D}' = \{W_1\}$. $\vec{D}'$ triggers \ARef*{alg:perfect_iv_trick}, because $W_1(a_2) \ci S | W_2,A_3,C$ in $\bar{q}$ and $\vec{D}^*_3$ is (trivially) reachable in $\bar{\G}^{[{s}]}({s})_{\vec{D}'}$. Then, $q_{\vec{D}'}^{\bar{s}, \vec{z}} (\vec{D}' \mid \pas(\vec{D}')) = p(W_1 \mid W_2, a_2, {s})$.



$\vec{D}^*_4$ reaches \ARef*{alg:regular_fixing_kernel}, giving $q_{\vec{D}^*_4}(\vec{D}^*_4 | \pas(\vec{D}^*_4))$$ = p(C)$. 
This is a case our algorithm has in common with the ID algorithm, as no special structure of the problem involving the selector $S$ needs to be involved.

The identifying functional is
$p(Y (\vec{a}, S=\emptyset)) = \sum_{\vec{Y}^* \setminus Y} \!\!\prod_{D^*_i \in {\cal D}(\G_{\vec{Y}^*})}  q_{\vec{D}^*_i}(\vec{D}^*_i | \pas(\vec{D}^*_i))$, which is equal to
{\small
\begin{align*}
\sum_{M, W_1, W_2, C}
& p(C) p(M | a_1,
    s_{a_1})
 p(W_1 | W_2, a_2, 
    s_{a_2})\\
&  \sum_{A_3} p(Y | M,W_2,W_1,C,s_{a_1,a_2},A_3) p(W_2,A_3)
\end{align*}
}
\end{ex}

\subsection{Illustrating the Use of Multigraphs For Representing Latent Projections of LS-DAGs}

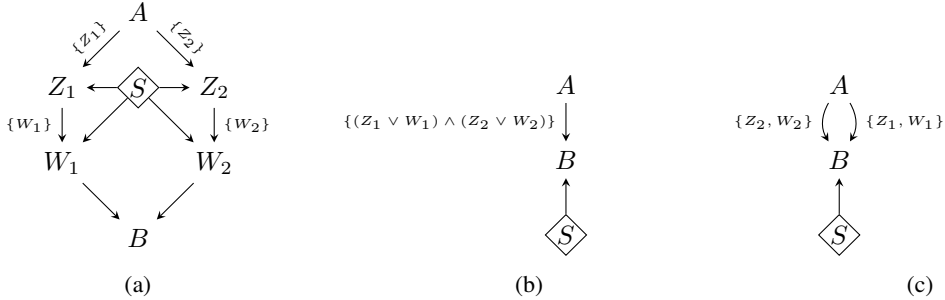
\begin{figure}[h]
\centering
\begin{subfigure}{0.3\textwidth}
\centering
        \begin{tikzpicture}[rotate=0]
		\tikzstyle{block} = [draw, circle, inner sep=1.5pt, fill=lightgray]
		\tikzstyle{block2} = [draw, rectangle, inner sep=1.5pt, fill=lightgray]
		\tikzstyle{selector} = [draw, diamond, inner sep=.5pt]
		\tikzstyle{input} = [coordinate]
		\tikzstyle{output} = [coordinate]
            \tikzset{edge/.style = {->,> = latex'}}
            \node[] (a) at (0, 0) {$A$};
            \node[selector] (s) at (0, -1) {$S$};
            \node[] (z1) at (-1, -1) {$Z_1$};
            \node[] (z2) at (1, -1) {$Z_2$};
            \node[] (w1) at (-1, -2) {$W_1$};
            \node[] (w2) at (1, -2) {$W_2$};
            \node[] (b) at (0, -3) {$B$};

            \draw[-stealth] (a) to node[above,sloped]{\tiny $\{Z_1\}$}(z1);
            \draw[-stealth] (a) to node[above,sloped]{\tiny $\{Z_2\}$}(z2);
            \draw[-stealth] (s) to (z2);
            \draw[-stealth] (s) to (z1);
            \draw[-stealth] (s) to (w2);
            \draw[-stealth] (s) to (w1);
            \draw[-stealth] (z2) to node[right]{\tiny $\{W_2\}$}(w2);
            \draw[-stealth] (z1) to node[left]{\tiny $\{W_1\}$}(w1);
            \draw[-stealth] (w1) to (b);
            \draw[-stealth] (w2) to (b);

            
        \end{tikzpicture}
        \caption{}
        \label{fig:multigraph}
\end{subfigure}%
\begin{subfigure}{0.3\textwidth}
  \begin{tikzpicture}[rotate=0]
  \centering
		\tikzstyle{block} = [draw, circle, inner sep=1.5pt, fill=lightgray]
		\tikzstyle{block2} = [draw, rectangle, inner sep=1.5pt, fill=lightgray]
		\tikzstyle{selector} = [draw, diamond, inner sep=.5pt]
		\tikzstyle{input} = [coordinate]
		\tikzstyle{output} = [coordinate]
            \tikzset{edge/.style = {->,> = latex'}}
            \node[] (a) at (0, 0) {$A$};
            \node[selector] (s) at (0, -2) {$S$};
            \node[] (b) at (0, -1) {$B$};
            \draw[-stealth] (s) to (b);
            \draw[-stealth] (a) to node[left]{\tiny $\{(Z_1 \lor W_1) \land (Z_2 \lor W_2)\}$}(b);
        \end{tikzpicture}
    \caption{}
    \label{fig:multigraph_boolean_label}
\end{subfigure}%
\begin{subfigure}{0.3\textwidth}

  \begin{tikzpicture}[rotate=0]
  \centering
		\tikzstyle{block} = [draw, circle, inner sep=1.5pt, fill=lightgray]
		\tikzstyle{block2} = [draw, rectangle, inner sep=1.5pt, fill=lightgray]
		\tikzstyle{selector} = [draw, diamond, inner sep=.5pt]
		\tikzstyle{input} = [coordinate]
		\tikzstyle{output} = [coordinate]
            \tikzset{edge/.style = {->,> = latex'}}
            \node[] (a) at (0, 0) {$A$};
            \node[selector] (s) at (0, -2) {$S$};
            \node[] (b) at (0, -1) {$B$};
            \draw[-stealth] (s) to (b);
            \draw[-stealth, bend right=30] (a) to node[left]{\tiny $\{Z_2, W_2\}$}(b);
            \draw[-stealth, bend left=30] (a) to node[right]{\tiny $\{Z_1, W_1\}$}(b);
        \end{tikzpicture}
    \caption{}
    \label{fig:multigraph_multi_label}
\end{subfigure}
\caption{An LS-ADMMG illustrating the need for multigraph representations.}
\end{figure}

In this section we provide some further details on why multigraphs are required, especially under latent projections. Consider \cref{fig:multigraph}. In a latent projection of $Z_1, Z_2, W_1, W_2$, only variables $A, B, S$ remain. The directed edge $A \to B$ can now disappear if we consider an intervention where at least one of $Z_1, W_1$ is intervened upon, and at least one of $Z_2, W_2$ is intervened upon. This could be represented by a boolean logic label (e.g. $\{ (Z_1 \lor W_1) \land (Z_2 \lor W_2) \}$).  However, computing the correct $\G^{[s]}$ graph then requires of a boolean expression on each label, which is difficult to interpret visually (see \cref{fig:multigraph_boolean_label}).

Instead, we provide an alternative representation of the latent projection of \cref{fig:multigraph} via a multigraph shown in \cref{fig:multigraph_multi_label}. Here, if given a value of $s$, the corresponding $\G^{[s]}$ graph can be recovered by checking for each edge whether the label intersects the value of $s$.

\subsection{Failure Cases}
In this section we illustrate various failure cases that could occur throughout the application of \cref{alg:main-general}.

\begin{figure*}[h]
\centering
\begin{subfigure}[b]{0.15\textwidth}
\centering
        \begin{tikzpicture}
		\tikzstyle{block} = [draw, circle, inner sep=.5pt]
		\tikzstyle{selector} = [draw, diamond, inner sep=.5pt]
		\tikzstyle{block2} = [draw, rectangle, inner sep=2.5pt, fill=lightgray]
		\tikzstyle{input} = [coordinate]
		\tikzstyle{output} = [coordinate]
            \tikzset{edge/.style = {->,> = latex'}}
            \node[] (y) at  (0, 0) {$Y$};
            \node[selector] (s) at  (0, 1) {$S$};
            
            \draw[-stealth] (s) to (y);
        \end{tikzpicture}
        \caption{}
        \label{fig:fail-case-1}
\end{subfigure}%
\begin{subfigure}[b]{0.15\textwidth}

\centering
        \begin{tikzpicture}
		\tikzstyle{block} = [draw, circle, inner sep=.5pt]
		\tikzstyle{selector} = [draw, diamond, inner sep=.5pt]
		\tikzstyle{block2} = [draw, rectangle, inner sep=2.5pt, fill=lightgray]
		\tikzstyle{input} = [coordinate]
		\tikzstyle{output} = [coordinate]
            \tikzset{edge/.style = {->,> = latex'}}
            \node[selector] (s) at  (0, 0) {$S$};
            \node[] (a1) at  (0, -1) {$A_1$};
			\node[] (a2) at  (1, 0) {$A_2$}; 
			\node[] (y) at  (1, -1) {$Y$}; 
            \draw[-stealth] (a1) to (y);
            \draw[-stealth] (a2) to (y);
            \draw[-stealth] (s) to (a1);
            \draw[-stealth] (s) to (a2);
            \draw[stealth-stealth][bend right=35] (a1) to (y);
            \draw[stealth-stealth][bend left=35] (a2) to (y);

        \end{tikzpicture}
        \caption{}
        \label{fig:fail-case-2}
\end{subfigure}%
\begin{subfigure}[b]{0.15\textwidth}
\centering
        \begin{tikzpicture}
		\tikzstyle{block} = [draw, circle, inner sep=.5pt]
		\tikzstyle{selector} = [draw, diamond, inner sep=.5pt]
		\tikzstyle{input} = [coordinate]
		\tikzstyle{output} = [coordinate]
            \tikzset{edge/.style = {->,> = latex'}}
            
            \node[selector] (s) at  (0, 0) {$S$};
            \node[] (a) at  (0, -1) {$A$};
			\node[] (y) at  (1, -1) {$Y$}; 
            
            \draw[-stealth] (a) to (y);
            \draw[stealth-stealth][bend left=35] (a) to (y);
            \draw[stealth-stealth][bend left=35] (s) to (y);
            \draw[-stealth] (s) to (y);
        \end{tikzpicture}
        \caption{}
        \label{fig:fail-case-3}
\end{subfigure}%
\caption{Examples of Failure Cases in \cref{alg:main-general} and \cref{alg:cs-general}}
\label{fig:examples-failures}
\end{figure*}

In all cases, we are interested in identifying a kernel for a particular district $\vec{D}^*$, which is in the graph $\G_{\vec{Y}^*}$, where we remind the reader that $\vec{Y}^* = \an_{\G_{\vec{V} \setminus (\vec{A} \cup \{S\}) }} (\vec{Y})$.

\begin{enumerate}
\item \ARef*{alg:general_fail_positivity}: Consider \cref{fig:fail-case-2}, where we let $\mathfrak{X}_{S_0} = \{\{A_1\}, \{A_2\}\}$. Then, under each value of $S$, the district $\{Y\}$ is not reachable. A thicket can be constructed according to \cite{leeGeneralIdentifiabilityArbitrary2019,kivvaRevisitingGeneralIdentifiability2022}.
\item \ARef*{alg:general_fail_thicket}: Consider \cref{fig:fail-case-1}, where $\mathfrak{X}_{S_0} = \{\{Y\}\}$, and we are interested in identifying $p(Y \mid \doo(S=(\emptyset, \emptyset))$. Then in this case we never observe the true distribution of $Y$ where $S$ is laid-back for $Y$. Then it is easy to conceive of two models which have different distributions for $p(Y \mid S = (\emptyset, \emptyset))$, which is not part of observed data. The observed data in this case is the randomization probabilities on $Y$ that was specified by the experimenter.

\item \ARef*{alg:cs_fail_1}: Consider \cref{fig:fail-case-3}, where $\vec{D}^*= \{Y\}$, and let $\mathfrak{X}_{S_0} = \{\emptyset, \{Y\}\}$. We see that \ARef*{alg:cs_general_call} gets triggered because there is an $s$ which is laid-back for $\vec{D}^*$ (namely $(\emptyset, \emptyset)$), $A$ is not fixable so $\cl(\vec{D}^*) = \{Y, A, S\} \neq  \{Y, S\}=\vec{D}^*$, and that $S \in \cl(\vec{D}^*)$.

Then, since $S$ is not a parent of $A$, the possibly modified hedge $\langle \{S, Y\}, \{S, Y, A\}\rangle$ is returned. See \cref{alg:partial_completeness} for details.

\end{enumerate}

\end{document}